\newtheorem{lemma}{\indent Lemma}
\newtheorem*{proof}{\indent Proof}
\newcommand{\Rmnum}[1]{\expandafter\@slowromancap\romannumeral #1@}
\begin{document}

\title{Robust Beamforming Design for RIS-aided Cell-free Systems with CSI Uncertainties and Capacity-limited Backhaul}

\author{Jiacheng~Yao,~\IEEEmembership{Student Member,~IEEE,} 
		Jindan~Xu,~\IEEEmembership{Member,~IEEE,} 
		Wei~Xu,~\IEEEmembership{Senior~Member,~IEEE,}
		Derrick~Wing~Kwan~Ng,~\IEEEmembership{Fellow,~IEEE,}
		Chau~Yuen,~\IEEEmembership{Fellow,~IEEE,} 
		and~Xiaohu~You,~\IEEEmembership{Fellow,~IEEE}
\thanks{Part of this paper was presented in \emph{IEEE ISWCS}, Hangzhou, China, Oct. 2022 \cite{me}.}
\thanks{J. Yao, W. Xu, and X. You are with the National Mobile Communications Research Laboratory (NCRL), Southeast University, Nanjing 210096, China, and also with Purple Mountain Laboratories, Nanjing 211111, China (\{jcyao, wxu, xhyu\}@seu.edu.cn).}
\thanks{J. Xu and C. Yuen are with Engineering Product Development (EPD) Pillar, Singapore University of Technology and Design, Singapore 487372, Singapore (jindan\_xu@sutd.edu.sg, yuenchau@sutd.edu.sg).}
\thanks{Derrick Wing Kwan Ng is with the School of Electrical Engineering and Telecommunications, University of New South Wales, Sydney, NSW 2052, Australia (w.k.ng@unsw.edu.au).}}

%

\maketitle

\begin{abstract}

In this paper, we consider the robust beamforming design in a reconfigurable intelligent surface (RIS)-aided cell-free (CF) system considering the channel state information (CSI) uncertainties of both the direct channels and cascaded channels at the transmitter with capacity-limited backhaul. We jointly optimize the precoding at the access points (APs) and the phase shifts at multiple RISs to maximize the worst-case sum rate of the CF system subject to the constraints of maximum transmit power of APs, unit-modulus phase shifts, limited backhaul capacity, and bounded CSI errors. By applying a series of transformations, the non-smoothness and semi-infinite constraints are tackled in a low-complexity manner that facilitates the design of an alternating optimization (AO)-based iterative algorithm. The proposed algorithm divides the considered problem into two subproblems. For the RIS phase shifts optimization subproblem, we exploit the penalty convex-concave procedure (P-CCP) to obtain a stationary solution and achieve effective initialization. For precoding optimization subproblem, successive convex approximation (SCA) is adopted with a convergence guarantee to a Karush-Kuhn-Tucker (KKT) solution. Numerical results demonstrate the effectiveness of the proposed robust beamforming design, which achieves superior performance with low complexity.
Moreover, the importance of RIS phase shift optimization for robustness and the advantages of  distributed RISs in the CF system are further highlighted.
\end{abstract}

\begin{IEEEkeywords}
Cell-free (CF), reconfigurable intelligent surface (RIS), robust beamforming design, capacity-limited backhaul.
\end{IEEEkeywords}

\section{Introduction}
A cell-free (CF) system, that is a user-centric network paradigm, has recently attracted great attention to satisfy unprecedented growing demands for next-generation wireless networks\cite{cf0,zongshu}. Unlike the conventional architecture of cellular networks, a large number of serving antennas, known as access points (APs), are distributed over a wide service area to mitigate the negative influence of shadow fading and to shorten communication distances between transceivers. In particular, all the APs in a CF system connect to a central processing unit (CPU) via backhaul links and these APs serve all users simultaneously with dynamic cooperation \cite{cf1,cf2,cf4}. Besides, by combining the advantages of massive multiple-input multiple-output (MIMO) and network MIMO, a CF system increases the network capacity substantially \cite{cf3}. Moreover, CF systems are free of inter-cell interference, which avoid any potential poor cell-edge user performance since the cell boundaries are eliminated. \par

Despite its significant advantages, there are various new technical challenges in CF systems, such as demanding backhaul capacity, high energy consumption, and huge computational complexity. For CF deployments, in practice, a large amount of data need to be conveyed through backhaul links to ensure efficient cooperation between all APs \cite{cf0}. Limited backhaul capacity is therefore a key bottleneck in practice that restricts the performance of CF systems to a great extent. In \cite{backhaul0,backhaulmodel,backhaul1}, these works have considered the beamforming design for spectral efficiency maximization with  backhaul capacity constraints for CF systems.\par

While achieving better performance, the energy consumption in a CF system is also a serious concern in practice due to the deployment of a large number of APs. Reconfigurable intelligent surface (RIS) has been a promising supplement to help unlock the potential of CF systems with reduced energy and cost \cite{george1,george2}. Specifically, the RIS is able to adaptively manipulate the electromagnetic wave propagation environment and to help decrease the overall system energy consumption by equipping a large number of low-cost passive reflecting elements \cite{ris1,ris3,ris4,ris2,wshi}. Indeed, the deployment of RIS in CF systems has been demonstrated as an effective approach for enhancing the performance of wireless communication in a number of recent works \cite{dai,energycf,dcf,cbf,twotime1,twotime2}. In \cite{dai}, by replacing some APs with low-cost RISs, the authors proposed a joint precoding and RIS phase optimization framework for maximizing the system weighted sum rate, leading to significant network capacity improvement. In \cite{energycf}, the energy efficiency was maximized by the optimization of hybrid beamforming in a RIS-aided CF system. It was shown that the system energy efficiency can be improved substantially through the introduction of RIS. Particularly, the performance gain depends on the number of RISs and the physical size of each RIS. Further, a decentralized beamforming scheme was proposed in \cite{dcf} for a RIS-aided CF system, which asymptotically approaches the performance of a centralized design. In \cite{cbf}, the authors considered the cooperative beamforming (CBF) besign for a RIS-aided CF system, where the hybrid beamforming at BSs and passive beamforming at RISs were jointly optimized to enhance the spectral efficiency. In addition, a two-timescale transmission design was considered in \cite{twotime1,twotime2} for RIS-aided CF systems. The considerable performance gains from the adoption of the CF paradigm and the deployment of RISs were revealed through theoretical and simulation results. It has been widely expected that RIS-aided CF is a key to exhibit higher spectral and energy efficiencies in future wireless communication networks for supporting broad emerging applications.\par

It is worth noting that the beamforming design for RIS-aided CF systems in most of these works, e.g., \cite{dai,energycf,dcf,cbf}, optimistically assumed the availability of perfect channel state information (CSI), which is however impossible to acquire in practice due to limited system resources. In fact, the passive nature of RIS does not facilitate any signal transmission, reception, and processing. Besides, the huge signaling overhead brought by the large number of reflecting elements makes it challenging to estimate channels involving RISs \cite{estimation}. Specifically, without employing any active elements at RIS for advanced signal processing, the cascaded channel of the AP-RIS link and the RIS-user link is usually estimated as an alternative \cite{cascaded,decompose,zhang}. To reduce the huge signaling overhead of training pilots, a compressed channel estimation method exploiting the sparsity in millimeter-wave (mmWave) propagations was proposed in \cite{compress}. In addition, the authors in \cite{fenzu} proposed a novel reflection pattern at the RIS to simplify the beamforming and channel estimation design, where the reflecting elements of RIS were grouped and group-based phase shift pattern was optimized. Apart from these typical passive RIS, hybrid reflecting and sensing RISs (HRISs), which allows the sensing of the impinging signal, is an alternative architecture \cite{george3}. Thanks to these sensing capability of HRIS, channel estimation for RIS-related channels is greatly facilitated \cite{george4}. Despite these fruitful results, only partial CSI is available in practice and the presence of channel estimation error at the transmitter is generally inevitable.\par

Regarding the consideration of partial CSI, there have been a number of studies on robust precoding design especially for multicell networks, e.g., in \cite{mcell1,mcell2,cui}. However, 
they did not consider RIS phase shift optimization, which makes it impossible to reap the performance gain of RIS deployment. Moreover, it has been evidenced  in \cite{mcell1,mcell2,cui} that these worst-case optimization with imperfect CSI is fairly involved even for typical optimization variables and convex constraints, putting aside the additional nonconvex unit-modulus phase shift variables of RISs. As for RIS-aided centralized wireless networks, robust beamforming design with imperfect CSI have been studied in \cite{pan,green,hu,yu}. However, the prior works mostly considered the deployment of a single AP with the assistance of a single RIS, which can hardly guarantee the performance for users distributed over a large area. Unlike the centralized architecture, the performance of the CF system is also greatly limited by the backhaul capacity. Hence, considering the design with the limited backhaul capacity is of practical significance to improve the performance of RIS-aided CF systems. As such, in \cite{backhaulcf}, the authors maximized the energy efficiency for a RIS-aided CF system with the limited backhaul capacity constraints under the assumption of perfect CSI. However, an effective robust beamforming design for RIS-aided CF systems against CSI imperfectness with capacity-limited backhaul is still open.

In this paper, we investigate the robust beamforming design for a RIS-aided CF system with capacity-limited backhaul subject to CSI uncertainties. The bounded CSI error model of both the direct channel and cascaded channels at the transmitter is adopted. We consider the optimization of
the phase shifts at the distributed RISs and the precoding at APs by maximizing the worst-case sum rate of the system, subject to the individual transmit power constraint of AP, the maximum backhaul capacity constraint, and the unit-modulus constraints of phase shifts. The main contributions of this paper are summarized as follows:
\begin{itemize}
\item To tackle the non-smoothness caused by the $l_0$-norm in the backhaul constraints, we exploit the arctangent function to derive an accurate approximation of the $l_0$-norm. As for the semi-infinite constraints brought by the CSI errors, we propose a novel transformation scheme with much lower computational complexity than the traditional S-procedure based scheme\cite{pan,hu}. Specifically, we first derive a closed-form expression to characterize the worst-case value of the desired signal strength in order to simplify the worst-case term in the constraints. On the other hand, as for the interference term, an upper bound is derived  to deal with the infinite number of constraints. Through this efficient transformation, the complexity is successfully reduced by multiple orders-of-magnitude, further promoting the deployment of large-scale RIS-aided CF systems.

\item Due to the coupled AP precoding and RIS phase shifts, we propose an iterative algorithm via alternating optimization (AO) based on the derived transformations to maximize the worst-case sum rate. For the phase shift optimization subproblem involving nonconvex unit-modulus constraints, we use the penalty convex-concave procedure (P-CCP) with a stationary solution, which avoids the challenging requirement of finding a feasible initial point. For the precoding optimization subproblem, we exploit the successive convex approximation (SCA) method to tackle the nonconvexity and obtain a Karush-Kuhn-Tucker (KKT) solution. Moreover, considering possible violations of the backhaul constraints caused by the approximation, we further propose an efficient refinement approach to ensure a feasible solution.
\item We verify the effectiveness of the proposed robust beamforming design for the RIS-aided CF system via numerical results. In particular, compared with the traditional S-procedure based method, the proposed algorithm achieves rapid convergence with marginal performance loss. Meanwhile, the optimization for RIS phase shifts not only provides performance gain, but also improves the robustness against the CSI imperfection. It is found that accurate estimation of the cascaded channel plays a more critical role than that of the direct channel in achieving satisfactory performance. A distributed deployment of RISs is more suitable for CF systems in practice than a large-scale but centralized RIS deployment.
\end{itemize}

The rest of this paper is organized as follows. In Section~\Rmnum{2}, we discuss the system model of the RIS-aided CF system and characterize the CSI error model and backhaul capacity model. Based on these models, we formulate the robust beamforming design problem. In Section \Rmnum{3}, we discuss the transformations of the original problem and propose an iterative algorithm to solve the formulated problem. Finally, numerical results and conclusions are provided in Sections \Rmnum{4} and \Rmnum{5}, respectively.

\textit{Notations:} $\mathbb{C}$ denotes the complex-valued space. $\mathbb{E}\{\cdot\}$ denotes the expectation operation. $(\cdot)^T$, $(\cdot)^*$, and  $(\cdot)^H$ denote the transpose, conjugate, and conjugat transpose operations, respectively. $\Re \{\cdot\}$ and $\Im \{\cdot\}$ denote the real part and the imaginary part of an input complex number, respectively. $\angle(\cdot)$ denotes the phase of a complex number or a complex vector. $\Vert \cdot \Vert_0$ represents the $l_0$-norm. $\vert \cdot \vert$,  $\Vert \cdot \Vert_2$, and $\Vert \cdot \Vert_F$ denote the modulus of a complex number,  Euclidean norm, and Frobenius norm of matrices (or vectors), respectively. Operator ${\rm{diag}}\{\cdot\}$ denotes the diagonal operation. 

\section{System Model}
\subsection{Signal Model}
In this paper, we consider a RIS-aided cell-free multiple-input single-output (MISO) system, which is shown in Fig. 1. In this system, $K$ single-antenna users are served by $N$ APs, each equipped with $N_t$ antennas, with the assistance of $L$ distributed RISs. Each of the RISs is equipped with $M$ reflecting elements. In a cell-free system, all the APs are connected to a CPU through capacity-limited backhaul links, which is responsible for the calculation of resource allocation and scheduling. The RISs are connected to the CPU or APs and their phase shift configuration are controlled by the CPU or APs. We assume that global CSI is available at the CPU and centralized optimization is conducted by the CPU, which typically presents upper bounded performance of practical implementations of RIS-aided CF systems \cite{cpu}.

\begin{figure}[!t]
\centering
\includegraphics[width=3.8in]{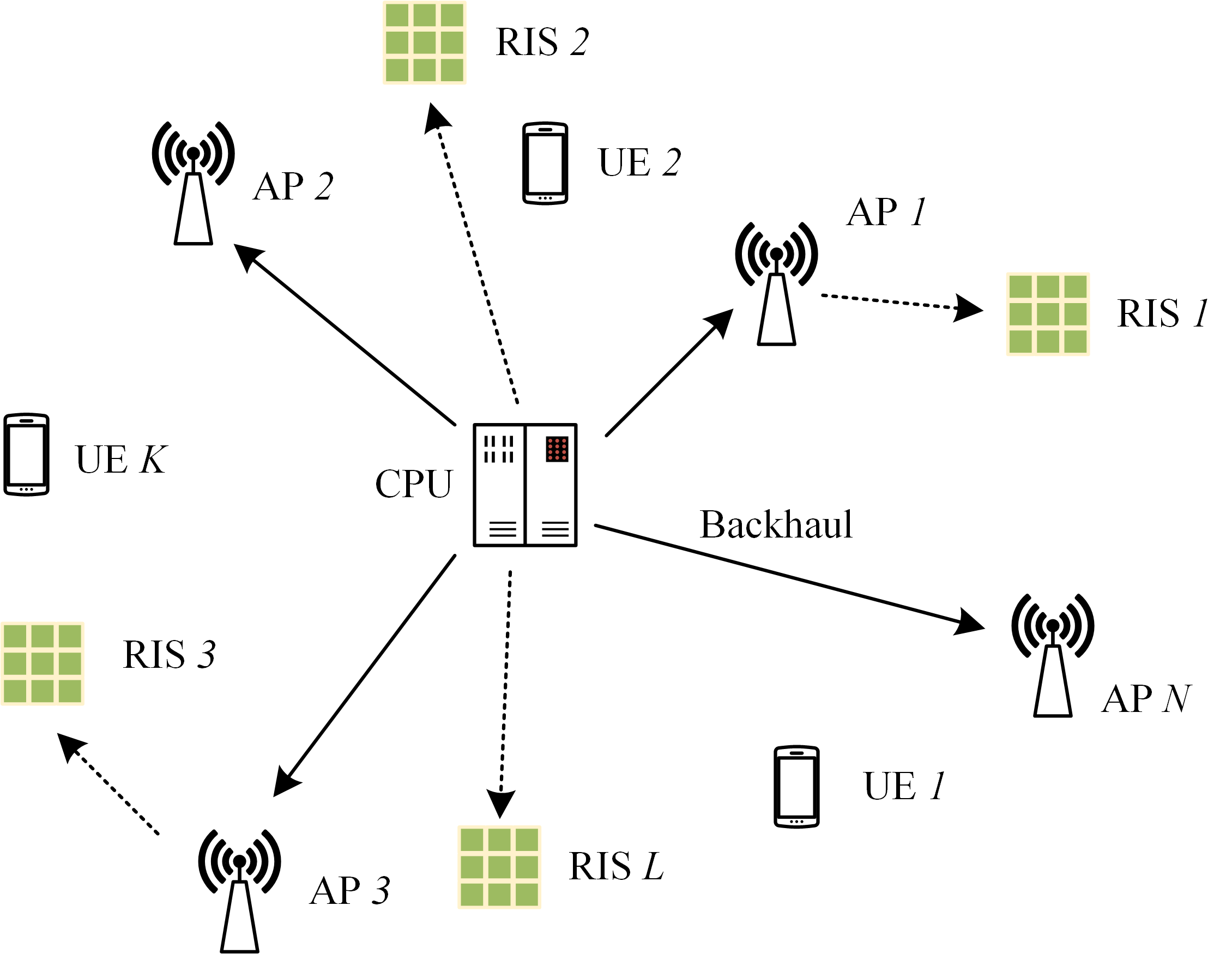}
\caption{A RIS-aided CF system with $N$ APs, $K$ UEs, and $L$ distributed RISs.}
\end{figure}
The channel between the $n$th AP and the $k$th user includes a direct link and $L$ reflecting paths. Let $\bm{h}_{d,n,k} \in \mathbb{C}^{N_t\times1} $, $\bm{h}_{r,l,k}\in \mathbb{C}^{M\times1}$, and $\bm{G}_{n,l}\in \mathbb{C}^{M\times N_t}$, $\forall n \in \{ 1,\cdots, N\}$, $\forall l \in \{ 1,\cdots, L\}$, $\forall k \in \{ 1,\cdots, K\}$, denote the direct channel between the $n$th AP and the $k$th user, the reflecting channel between the $l$th RIS and the $k$th user, and the channel between the $n$th AP and the $l$th RIS, respectively. The phase shift control at the $l$th RIS is denoted by $\bm{\Theta}_l={\rm{diag}}\{ e^{j\theta_{l,1}},\cdots,e^{j\theta_{l,M}}\}$, where ${\rm{e}}^{j\theta_{l,m}}$ represents the phase shift induced by the $m$th antenna element of the $l$th RIS and we ignore the amplitude reflection coefficient induced by RIS for simplicity. Due to severe power loss, we neglect the signals reflected by the RISs for more than once \cite{dai}.  Also,  different propagation delays due to multiple RISs are also neglected as they are generally much shorter than the symbol duration \cite{green}. Hence, the equivalent channel between the $n$th AP and the $k$th user is expressed as
\begin{align}\label{eq1}
\bm{h}_{n,k}^H&=\bm{h}_{d,n,k}^H + \sum_{l=1}^L \bm{h}_{r,l,k}^H \bm{\Theta}_l \bm{G}_{n,l} \triangleq \bm{h}_{d,n,k}^H + \sum_{l=1}^L \bm{v}_l^H \bm{Z}_{n,l,k},
\end{align}
where $\bm{v}_l\triangleq[{\rm{e}}^{j\theta_{l,1}},\cdots,{\rm{e}}^{j\theta_{l,M}}]^H $, and $\bm{Z}_{n,l,k}\triangleq {\rm{diag}}\{\bm{h}_{r,l,k}^H \}  \bm{G}_{n,l}$ represents the cascaded channel between the $n$th BS and the $k$th user through the $l$th RIS. For notational simplicity, we further define $\bm{v}\triangleq [\bm{v}_{1}^H,\cdots,\bm{v}_L^H]^H$ and $\bm{Z}_{n.k}\triangleq [\bm{Z}_{n,1,k}^H,\cdots,\bm{Z}_{n,L,k}^H]^H$. Then, the equivalent channel in (\ref{eq1}) is rewritten as
\begin{equation}
\bm{h}_{n,k}^H =\bm{h}_{d,n,k}^H + \bm{v}^H \bm{Z}_{n,k}.
\end{equation}\par
Then, we write the received signal at the $k$th user as
\begin{align}\label{eq3}
y_k =\sum_{n=1}^N\left(\bm{h}_{d,n,k}^H + \bm{v}^H \bm{Z}_{n,k} \right)\bm{w}_{n,k} s_{k} +\sum_{j\neq k}^K \sum_{n=1}^N  \left(\bm{h}_{d,n,k}^H + \bm{v}^H \bm{Z}_{n,k} \right)\bm{w}_{n,j} s_{j}+ n_k,
\end{align}
where $\bm{w}_{n,k}\in \mathbb{C}^{N_t\times 1}$ represents the beamforming vector at the $n$th BS for the $k$th user, $s_{k}\in \mathbb{C}$ is the symbol transmitted to the $k$th user satisfying $\mathbb{E}\{\vert s_k \vert^2\}=1$ and $\mathbb{E}\{ s_k s_j^*\}=0$, $ \forall k\neq j$, and $n_k$ denotes the additive Gaussian noise with zero mean and variance $\sigma_k^2$, i.e., $n_k \sim \mathcal{CN}(0,\sigma_k^2)$. The received signal consists of both the desired signal and interference from other users.

According to the system model in (\ref{eq3}), the SINR at the $k$th user can be formulated as follows
\begin{equation}
{\rm{SINR}}_k=\frac{ \left \vert \sum_{n=1}^N \left (\bm{h}_{d,n,k}^H +\bm{v}^H \bm{Z}_{n,k} \right)\bm{w}_{n,k} \right \vert^2}{\sum_{j\neq k}^K \left \vert \sum_{n=1}^N \left(\bm{h}_{d,n,k}^H + \bm{v}^H \bm{Z}_{n,k} \right)\bm{w}_{n,j} \right \vert ^2 +\sigma_k^2   },
\end{equation}
and thus the achievable rate of the $k$th user can be evaluated as
\begin{equation}
R_k={\rm{log_2}}(1+{\rm{SINR}}_k).
\end{equation}

\subsection{Channel Uncertainty Model}
In practice, only partial and imperfect CSI is available at the transmitter. In this paper, we adopt a bounded model to characterize the CSI imperfection, which is a general model that isolates the resource allocation design from the specific channel estimation design\cite{hu}. Specifically, the bounded error model is suitable to capture different types of CSI errors, e.g., due to noisy channel estimation, quantization, limited feedbacks, and other possible factors \cite{green,cui}, which is appealing for large-scale distributed systems such as RIS-aided CF systems. Considering that the channel consists of the direct link and cascaded channels, there exist different channel uncertainties. Specifically, the direct link, $\bm{h}_{d,n,k}$, and the cascaded channel, $\bm{Z}_{n,k}$, are, respectively, modeled as
\begin{align}\label{csimodel}
\bm{h}_{d,n,k}&=\hat{\bm{h}}_{d,n,k}+\Delta \bm{h}_{d,n,k},\enspace \Vert \Delta \bm{h}_{d,n,k} \Vert_2 \leq \epsilon_{d,n,k},\nonumber \\
\bm{Z}_{n,k}&=\hat{\bm{Z}}_{n,k}+\Delta \bm{Z}_{n,k},\enspace \Vert \Delta \bm{Z}_{n,k} \Vert_F \leq \epsilon_{c,n,k},
\end{align}
where $\hat{\bm{h}}_{d,n,k}$ and $\hat{\bm{Z}}_{n,k}$ are the estimates of $\bm{h}_{d,n,k}$ and $\bm{Z}_{n,k}$, respectively, and the norms of the unknown CSI errors $\Delta \bm{h}_{d,n,k}$ and $\Delta \bm{Z}_{n,k}$ are limited in the uncertainty regions of constant radii $\epsilon_{d,n,k}$ and $\epsilon_{c,n,k}$, respectively.

\subsection{Backhaul Capacity}
Instead of considering the availability of an infinite-capacity backhaul, we assume that the backhaul link between the CPU and the $n$th AP has the maximal capacity denoted by $C_n$. The data rate conveyed over the backhaul link between the CPU and the $n$th AP is modeled as the sum of the achievable rates of all the users served by the $n$th AP\cite{backhaul1,backhaulmodel}. Moreover, the backhaul capacity should be at least $\xi_n\geq1$ times larger than the data rate transmitted via the $n$th backhaul link maximal capacity  to ensure a feasible transmission\cite{backhaul2}. Hence, we formulate the backhaul capacity constraints under the worst-case transmission as 
\begin{align}
\sum_{k=1}^K \left\Vert \Vert \bm{w}_{n,k}\Vert^2\right\Vert_0\min_{\{\Delta\bm{h}_{d,n,k}\}_{n=1}^N,\, \{\Delta\bm{Z}_{n,k}\}_{n=1}^N}R_k\leq \frac{C_n}{\xi_n},\enspace n=1,2,\cdots,N.
\end{align}
Based on this model, the backhaul capacity constraint is satisfied via the design of beamforming and reducing the number of users being served by each AP. In particular, if the CPU does not forward the $k$th user's data to the $n$th AP, the $n$th AP is not able to serve the $k$th user, i.e., $\Vert \bm{w}_{n,k}\Vert^2=0$, and the $l_0$ norm of $\Vert \bm{w}_{n,k}\Vert^2$ equals to 0, otherwise equals to 1. 

\subsection{Problem Formulation}
In this paper, we aim to maximize the worst-case sum rate of all the users by jointly optimizing both the precoding of APs and the phase shift of the RISs and taking into account the impacts of CSI uncertainties and limited backhaul capacity. The optimization problem is formulated as
\begin{align}\label{problem}
(\mathrm{P}1):\enspace \mathop{{\rm{maximize}}}_{\bm{W},\,\bm{v}} &\quad   \sum_{k=1}^K \min_{ \{\Delta\bm{h}_{d,n,k}\}_{n=1}^N,\, \{\Delta\bm{Z}_{n,k}\}_{n=1}^N}R_k \nonumber \\
{\rm{s.t.}}&\quad {\rm{C}}_1:\enspace \sum_{k=1}^K \Vert \bm{w}_{n,k}\Vert_2^2 \leq P_{n}, \enspace n=1,2,\cdots,N, \nonumber\\
&\quad {\rm{C}}_2:\enspace \vert [\bm{v}]_{m}\vert^2 =1,\enspace m=1,2,\cdots,ML,  \nonumber\\
&\quad {\rm{C}}_3:\enspace \sum_{k=1}^K \left\Vert \Vert \bm{w}_{n,k}\Vert^2\right\Vert_0\min_{\{\Delta\bm{h}_{d,n,k}\}_{n=1}^N,\, \{\Delta\bm{Z}_{n,k}\}_{n=1}^N}R_k\leq \frac{C_n}{\xi_n},\enspace n=1,2,\cdots,N,
\end{align}
where $P_{n}$ is the maximum power constraint of the $n$th AP, $\bm{W}\triangleq \{\bm{w}_{n,k}\}$ is the set of all precoding vectors, and $[\bm{v}]_{m}$ is the $m$th element of $\bm{v}$. The constraint in ${\rm{C}}_2$  represents the unit-modulus constraint of each reflecting RIS element. \par
 
It is worth noting that the problem ($\mathrm{P1}$) is neither convex nor smooth and thus challenging to be solved. Particularly, the precoding matrix, $\bm{W}$, and the phase shift vector, $\bm{v}$, are coupled so that it further complicates their designs. In the following sections, we first provide useful transformations and approximations to simplify the problem at hand that pave the way for developing a computationally efficient algorithm to solve it.

\section{Proposed Robust Beamforming Design}
Due to the intractable form of ($\mathrm{P1}$), we first propose a transformation to recast the problem to a more tractable  and more computationally efficient form. Then, we develop a suboptimal AO-based algorithm to divide the transformed problem into two subproblems, i.e., the phase shift optimization subproblem and the precoding optimization subproblem, which are respectively solved by the P-CCP and SCA technique. In Table \ref{table:2}, we mainly summarize the technical details of the proposed algorithm.

\subsection{Problem Transformation}
To begin with, we deal with the nonconvex objective function. By introducing the slack optimization variables $\{\gamma_k\}_{k=1}^K$ to replace the worst-case SINR terms, we reformulate the original problem ($\mathrm{P}1$) as
\begin{align}
\mathop{{\rm{maximize}}}_{\bm{W},\,\bm{v},\,\bm{\gamma}} &\quad  \sum_{k=1}^K \log_2(1+\gamma_k) \nonumber \\
{\rm{s.t.}}&\quad {\rm{C}}_1,\enspace{\rm{C}}_2,  \nonumber\\
&\quad \mathrm{C}_3:\enspace \sum_{k=1}^K \left\Vert \Vert \bm{w}_{n,k}\Vert_2^2\right\Vert_0 \log_2(1+\gamma_k)\leq \frac{C_n}{\xi_n},\enspace n=1,2,\cdots,N,\nonumber \\
&\quad {\rm{C}}_4:\enspace \min_{  \{\Delta\bm{h}_{d,n,k}\}_{n=1}^N,\, \{\Delta\bm{Z}_{n,k}\}_{n=1}^N}{\rm{SINR}}_k \geq \gamma_k, \enspace k=1,2,\cdots,K.
\end{align}
In addition to the nonconvexity, constraint $\mathrm{C}_3$ is nonsmooth due to the $l_0$-norm and constraint $\mathrm{C}_4$ contains semi-infinite constraints due to the CSI uncertainties. Then, we tackle the non-smoothness and the semi-infinite constraints in the following.

\begin{table}[!t]   
\begin{center}   
\caption{Main Techniques of the Proposed Robust Beamforming Design}  
\label{table:2} 
\begin{tabular}{|p{1.2cm}|p{3cm}|p{4.2cm}|p{3.4cm}|p{2.1cm}|}   
\hline   \textbf{Objective}         & \textbf{Difficulty}   & \textbf{Relaxation and transformation}  & \textbf{Algorithms and features}&\textbf{Complexity}\\
\hline   \multirow{4}{1.2cm}{($\mathrm{P}3$)}& \multirow{5}{3cm}{semi-infinite constraints and  unit-modulus constraint}  & $\mathrm{C}_3 \to \bar{\mathrm{C}}_3 \to \tilde{\mathrm{C}}_3$, &\multirow{5}{3.4cm}{P-CCP, with a stationary solution and effective initialization (Alg. 1)} &  \multirow{5}{*}{$\mathcal{O}(KM^{4.5}L^{4.5})$} \\
& &$\mathrm{C}_4\to\bar{\mathrm{C}}_4  \to \mathrm{C}_{5-7}$&&\\
& &$\mathrm{C}_5\to \bar{\mathrm{C}}_5 \to \tilde{\mathrm{C}}_5$,&&\\
& &${\mathrm{C}}_6 \to \bar{\mathrm{C}}_6$,&&\\
& &$\mathrm{C}_2\to \mathrm{C}_{11-12}\to\bar{\mathrm{C}}_{11-12} $&&\\
\hline   \multirow{3}{1.2cm}{($\mathrm{P4}$)}& \multirow{3}{3cm}{Non-smoothness, and nonconvexity}  & $\mathrm{C}_3\to\bar{\mathrm{C}}_3 \to \mathrm{C}_{8-10}\to \hat{\mathrm{C}}_{8-10}$ &\multirow{3}{3.4cm}{SCA, with a KKT point (Alg. 2)} &  \multirow{3}{*}{$\mathcal{O}(K^{5.5}N^{4.5}N_t^{4})$} \\
& &$\mathrm{C}_5\to \bar{\mathrm{C}}_5 \to \hat{\mathrm{C}}_5$&&\\
& &$\mathrm{C}_6\to \bar{\mathrm{C}}_6$&&\\
\hline \multicolumn{1}{|m{1.2cm}|}{($\mathrm{P}5,6$)}& \multicolumn{1}{m{3cm}|}{Possible infeasible solution} &$\mathrm{C}_3 \to {\mathrm{C}}_{13}$  & \multicolumn{1}{m{3.4cm}|}{Solving ($\mathrm{P5}$) and ($\mathrm{P6}$) via P-CCP and SCA, with a strictly feasible solution}&\multicolumn{1}{m{2.1cm}|}{$\mathcal{O}(KM^{4.5}L^{4.5})$, $\mathcal{O}(K^{5.5}N^{4.5}N_t^{4})$}\\
\hline 

\end{tabular}   
\end{center}   
\end{table}

\subsubsection{$l_0$-norm Approximation}
To further make the problem tractable, we focus on the nonsmooth $l_0$-norm in the backhaul constraint $\mathrm{C}_3$. By means of the arctangent smooth function in \cite{tao}, we can approximate $\left\Vert \Vert \bm{w}_{n,k}\Vert_2^2\right\Vert_0$ as 
\begin{align}\label{appro}
\left\Vert \Vert \bm{w}_{n,k}\Vert_2^2\right\Vert_0\approx \frac{2}{\pi} \mathrm{arctan}\left( \frac{ \Vert \bm{w}_{n,k}\Vert_2^2}{\varpi} \right)\triangleq f_{\varpi}\left(  \Vert \bm{w}_{n,k}\Vert_2^2 \right ),
\end{align}
where $\varpi>0$ is a predetermined parameter controlling the accuracy of the approximation. The smaller $\varpi$, the more accurate the approximation is. It is worth noting that $f_{\varpi}(\cdot)$ is a smooth and concave function for nonnegative input arguments \cite{tao}. Then, we obtain an alternative to constraint $\mathrm{C}_3$ as follows
\begin{align}
\bar{\mathrm{C}}_3:\enspace \sum_{k=1}^Kf_{\varpi}\left(  \Vert \bm{w}_{n,k}\Vert_2^2 \right ) \log_2(1+\gamma_k)\leq \frac{C_n}{\xi_n},\enspace n=1,2,\cdots,N,
\end{align}
which is a smooth but nonconvex. Its nonconvexity will be addressed in the next subsection.

\subsubsection{Infinite Constraints Reformulation}
The worst-case SINRs in constraint $\mathrm{C}_4$ is intractable due to the existence of nonconvexity and infinite constraints. It is easy to first give a lower bound for the worst-case SINR at the $k$th user as
\begin{align}
 \min_{  \{\Delta\bm{h}_{d,n,k}\}_{n=1}^N,\, \{\Delta\bm{Z}_{n,k}\}_{n=1}^N}{\rm{SINR}}_k \geq \frac{ \min_{  \{\Delta\bm{h}_{d,n,k}\}_{n=1}^N,\, \{\Delta\bm{Z}_{n,k}\}_{n=1}^N} \left \vert \sum_{n=1}^N \left (\bm{h}_{d,n,k}^H +\bm{v}^H \bm{Z}_{n,k} \right)\bm{w}_{n,k} \right \vert^2}{\max_{  \{\Delta\bm{h}_{d,n,k}\}_{n=1}^N,\, \{\Delta\bm{Z}_{n,k}\}_{n=1}^N} \sum_{j\neq k} \left \vert \sum_{n=1}^N \left (\bm{h}_{d,n,k}^H +\bm{v}^H \bm{Z}_{n,k} \right)\bm{w}_{n,j} \right \vert^2+\sigma_k^2}.
\end{align}
Due to the nonconvexity of the SINRs, we replace it with this lower bound and thus obtain a performance lower bound of the original problem. Constraint $\mathrm{C}_4$ is replaced by
\begin{align}
\bar{\mathrm{C}}_4:\enspace \frac{ \min_{  \{\Delta\bm{h}_{d,n,k}\}_{n=1}^N,\, \{\Delta\bm{Z}_{n,k}\}_{n=1}^N} \left \vert \sum_{n=1}^N \left (\bm{h}_{d,n,k}^H +\bm{v}^H \bm{Z}_{n,k} \right)\bm{w}_{n,k} \right \vert^2}{\max_{  \{\Delta\bm{h}_{d,n,k}\}_{n=1}^N,\, \{\Delta\bm{Z}_{n,k}\}_{n=1}^N} \sum_{j\neq k} \left \vert \sum_{n=1}^N \left (\bm{h}_{d,n,k}^H +\bm{v}^H \bm{Z}_{n,k} \right)\bm{w}_{n,j} \right \vert^2+\sigma_k^2} \geq \gamma_k, \enspace \forall k.
\end{align}
Then we further split constraint $\bar{\mathrm{C}}_4$ into the following equivalent constraints
\begin{align}
& {\rm{C}}_5:\enspace \min_{  \{\Delta\bm{h}_{d,n,k}\}_{n=1}^N,\, \{\Delta\bm{Z}_{n,k}\}_{n=1}^N} \left \vert \sum_{n=1}^N \left (\bm{h}_{d,n,k}^H +\bm{v}^H \bm{Z}_{n,k} \right)\bm{w}_{n,k} \right \vert^2\geq \alpha_k^2, \enspace k=1,2,\cdots,K, \nonumber\\
& {\rm{C}}_6:\enspace \max_{  \{\Delta\bm{h}_{d,n,k}\}_{n=1}^N,\, \{\Delta\bm{Z}_{n,k}\}_{n=1}^N} \sum_{j\neq k} \left \vert \sum_{n=1}^N \left (\bm{h}_{d,n,k}^H +\bm{v}^H \bm{Z}_{n,k} \right)\bm{w}_{n,j} \right \vert^2+\sigma_k^2 \leq \beta_k, \enspace k=1,2,\cdots,K, \nonumber\\
& {\rm{C}}_7:\enspace\frac{\alpha_k^2}{\beta_k} \geq \gamma_k, \enspace k=1,2,\cdots,K,
\end{align}
where $\bm{\alpha}\triangleq [\alpha_1,\cdots,\alpha_K]^T$ and $\bm{\beta}\triangleq [\beta_1,\cdots,\beta_K]^T$ are slack variables to decompose the fractions. It is worth noting that the left-hand side (LHS) of $\mathrm{C}_7$ is a convex quadratic-over-linear function and hence $\mathrm{C}_7$ is a convex constraint.

Before handling the constraints $\mathrm{C}_5$ and $\mathrm{C}_6$, we first simplify them through the following definitions. To be specific, define the effective direct and cascaded channel related to the $k$th user as $\bm{h}_{d,k}\triangleq [\bm{h}_{d,1,k}^H,\cdots,\bm{h}_{d,N,k}^H]^H $ and $\bm{Z}_{k}\triangleq [\bm{Z}_{1,k},\cdots, \bm{Z}_{N,k}]$, respectively. Their estimates are denoted by $\hat{\bm{h}}_{d,k}$ and $\hat{\bm{Z}}_k$, respectively. According to (\ref{csimodel}) and \cite{green}, the uncertainties of $\bm{h}_{d,k}$ and $\bm{Z}_k$  follow
\begin{align}\label{eq15}
 \Vert \Delta \bm{h}_{d,k} \Vert_2 \leq \sqrt{\sum_{n=1}^N \epsilon_{d,n,k}^2} \triangleq \epsilon_{d,k}\enspace \mathrm{and}\enspace \Vert \Delta \bm{Z}_k \Vert_F \leq \sqrt{\sum_{n=1}^N \epsilon_{c,n,k}^2} \triangleq \epsilon_{c,k},
\end{align}
respectively. Now, the constraint in $\mathrm{C}_5$ is reformulated as
\begin{align}
\min_{ \Delta\bm{h}_{d,k},\,\Delta\bm{Z}_{k}}  \left \vert \left (\bm{h}_{d,k}^H +\bm{v}^H \bm{Z}_{k} \right)\bm{w}_{k} \right \vert^2\geq \alpha_k^2, \enspace k=1,2,\cdots,K,
\end{align}
where $\bm{w}_{k}\triangleq [\bm{w}_{1,k}^H,\cdots,\bm{w}_{N,k}^H]^H$. To further simplify the expression of $\mathrm{C}_6$, we define $\bm{W}_{-k}\triangleq [\bm{w}_1,\cdots,\bm{w}_{k-1},\bm{w}_{k+1},\cdots,\bm{w}_K]$ and the constraint in $\mathrm{C}_6$ is rewritten as
\begin{align}\label{eq17}
\max_{ \Delta\bm{h}_{d,k},\,\Delta\bm{Z}_{k}}  \left \Vert \left(\bm{h}_{d,k}^H +\bm{v}^H \bm{Z}_{k} \right)\bm{W}_{-k} \right \Vert_2^2\leq \beta_k-\sigma_k^2, \enspace k=1,2,\cdots,K.
\end{align}
Note that the constraints $\mathrm{C}_5$ and $\mathrm{C}_6$ are both infinite many constraints due to that CSI uncertainties $\Delta \bm{h}_{d,k}$ and $\Delta \bm{Z}_{k}$ are respectively lie in the regions $\Vert \Delta \bm{h}_{d,k}\Vert_2\leq \epsilon_{d,k}$ and $\Vert\Delta \bm{Z}_{k}\Vert_F \leq \epsilon_{c,k}$.
To deal with the infinitely many constraints, the S-procedure \cite{sp} is an effective technique to transform $\mathrm{C}_5$ and $\mathrm{C}_6$ into tractable forms of linear matrix inequalities (LMIs). For a conventional MIMO system with not-too-many antenna elements, the S-procedure based method achieves good performance with relatively low complexity. However, for a large-scale antenna system, especially for the considered large-size RISs, the scale of LMIs generated by the S-procedure become excessively large due to the introduction of a growing number of reflecting RIS elements, resulting in prohibitively high computational complexity. To this end, we alternatively introduce the following lemma before devising a low-complexity transformation. 

\begin{lemma}
For any choices of  $\bm{W}$ and $\bm{v}$, the worst-case value of LHS in constraint $\mathrm{C}_5$ is equivalent to
\begin{align}
&\min_{ \Delta\bm{h}_{d,k},\,\Delta\bm{Z}_{k}}  \left \vert \left (\bm{h}_{d,k}^H +\bm{v}^H \bm{Z}_{k} \right)\bm{w}_{k} \right \vert= \max \left \{\left \vert \left (\hat{\bm{h}}_{d,k}^H +\bm{v}^H \hat{\bm{Z}}_{k}\right )\bm{w}_{k} \right \vert -\left (\epsilon_{d,k} +\sqrt{ML}\epsilon_{c,k} \right )\Vert \bm{w}_{k} \Vert_2,0\right \},
\end{align}
under the CSI uncertainty regions, i.e., $\Vert \Delta \bm{h}_{d,k} \Vert_2\leq \epsilon_{d,k}$, and $\Vert \Delta \bm{Z}_{k} \Vert_F \leq  \epsilon_{c,k}$, for $n=1,2,\cdots,N$. Analogously, the maximal value of LHS in constraint $\mathrm{C}_6$ follows
\begin{align}
\max_{ \Delta\bm{h}_{d,k},\,\Delta\bm{Z}_{k}}  \left \Vert \left(\bm{h}_{d,k}^H +\bm{v}^H \bm{Z}_{k} \right)\bm{W}_{-k} \right \Vert_2 \leq \left \Vert (\hat{\bm{h}}_{d,k}^H +\bm{v}^H \hat{\bm{Z}}_{k})\bm{W}_{-k}\right \Vert_2+ \left(\epsilon_{d,k}+\sqrt{ML} \epsilon_{c,k} \right) \left \Vert \bm{W}_{-k} \right \Vert_F.
\end{align}
\end{lemma}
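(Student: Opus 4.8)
The plan is to handle both bounds through a single decomposition of the effective channel into a nominal part and an aggregate error, controlling the error contribution by the triangle inequality together with two Cauchy--Schwarz estimates: one in the Euclidean inner product for the direct-link error and one in the Frobenius inner product for the cascaded error. Write the perturbed effective channel as $\bm{h}_{d,k}^H+\bm{v}^H\bm{Z}_k=\bm{a}_k^H+\bm{e}_k^H$, where $\bm{a}_k^H\triangleq\hat{\bm{h}}_{d,k}^H+\bm{v}^H\hat{\bm{Z}}_k$ is the nominal effective channel and $\bm{e}_k^H\triangleq\Delta\bm{h}_{d,k}^H+\bm{v}^H\Delta\bm{Z}_k$ aggregates both error terms. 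The two facts I will use repeatedly are that the unit-modulus constraint $\mathrm{C}_2$ forces $\Vert\bm{v}\Vert_2=\sqrt{ML}$, and that the spectral norm is dominated by the Frobenius norm, so $\Vert\Delta\bm{Z}_k\Vert_2\leq\Vert\Delta\bm{Z}_k\Vert_F\leq\epsilon_{c,k}$.

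First I would establish the scalar error estimate $\vert\bm{e}_k^H\bm{w}_k\vert\leq(\epsilon_{d,k}+\sqrt{ML}\,\epsilon_{c,k})\Vert\bm{w}_k\Vert_2$. The direct part obeys $\vert\Delta\bm{h}_{d,k}^H\bm{w}_k\vert\leq\Vert\Delta\bm{h}_{d,k}\Vert_2\Vert\bm{w}_k\Vert_2\leq\epsilon_{d,k}\Vert\bm{w}_k\Vert_2$ by Cauchy--Schwarz. For the cascaded part I would rewrite $\bm{v}^H\Delta\bm{Z}_k\bm{w}_k$ as the Frobenius inner product $\langle\bm{v}\bm{w}_k^H,\Delta\bm{Z}_k\rangle_F$, so that Cauchy--Schwarz gives $\vert\bm{v}^H\Delta\bm{Z}_k\bm{w}_k\vert\leq\Vert\bm{v}\bm{w}_k^H\Vert_F\Vert\Delta\bm{Z}_k\Vert_F=\sqrt{ML}\,\epsilon_{c,k}\Vert\bm{w}_k\Vert_2$, where I use that the Frobenius norm of the rank-one matrix $\bm{v}\bm{w}_k^H$ factors as $\Vert\bm{v}\Vert_2\Vert\bm{w}_k\Vert_2$. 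Summing the two via the triangle inequality gives the bound on $\vert\bm{e}_k^H\bm{w}_k\vert$, and the reverse triangle inequality then yields $\vert(\bm{a}_k^H+\bm{e}_k^H)\bm{w}_k\vert\geq\vert\bm{a}_k^H\bm{w}_k\vert-(\epsilon_{d,k}+\sqrt{ML}\,\epsilon_{c,k})\Vert\bm{w}_k\Vert_2$; since a modulus is nonnegative, the right-hand side may be replaced by its positive part, delivering the ``$\geq$'' direction of the $\mathrm{C}_5$ identity.

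The harder direction---and the main obstacle---is showing this lower bound is attained, so that the minimum is no larger than the claimed value. Here I would exhibit worst-case perturbations making every inequality above tight simultaneously. Assuming $\bm{w}_k\neq\bm{0}$ (the case $\bm{w}_k=\bm{0}$ being trivial) and writing $\phi\triangleq\angle(\bm{a}_k^H\bm{w}_k)$, I would take $\Delta\bm{h}_{d,k}=-\epsilon_{d,k}e^{-j\phi}\bm{w}_k/\Vert\bm{w}_k\Vert_2$ and $\Delta\bm{Z}_k=-\epsilon_{c,k}e^{j\phi}\bm{v}\bm{w}_k^H/(\sqrt{ML}\,\Vert\bm{w}_k\Vert_2)$; the asymmetric exponents arise because the conjugate transpose acts on the direct error whereas the cascaded error enters bilinearly. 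A direct check confirms these meet the radii $\epsilon_{d,k}$ and $\epsilon_{c,k}$ with equality and force both $\Delta\bm{h}_{d,k}^H\bm{w}_k$ and $\bm{v}^H\Delta\bm{Z}_k\bm{w}_k$ to equal $-e^{j\phi}$ times their maximal magnitudes, so the reductions subtract directly from $\vert\bm{a}_k^H\bm{w}_k\vert$. When the total reduction $(\epsilon_{d,k}+\sqrt{ML}\,\epsilon_{c,k})\Vert\bm{w}_k\Vert_2$ exceeds $\vert\bm{a}_k^H\bm{w}_k\vert$, scaling these same directions down to cancel the nominal term exactly attains the value $0$, and the two cases assemble into the $\max\{\cdot,0\}$ form. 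The delicate point is that both errors must share the single phase $\phi$ while each independently saturating its own norm bound, which is precisely what the rank-one alignment $\bm{v}\bm{w}_k^H$ makes possible inside the Frobenius ball.

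For the $\mathrm{C}_6$ bound only the upper estimate is required, so no tightness argument is needed. I would reuse the decomposition and apply the triangle inequality in the Euclidean row-vector norm, $\Vert(\bm{a}_k^H+\bm{e}_k^H)\bm{W}_{-k}\Vert_2\leq\Vert\bm{a}_k^H\bm{W}_{-k}\Vert_2+\Vert\bm{e}_k^H\bm{W}_{-k}\Vert_2$, and then bound the error block by submultiplicativity: $\Vert\Delta\bm{h}_{d,k}^H\bm{W}_{-k}\Vert_2\leq\epsilon_{d,k}\Vert\bm{W}_{-k}\Vert_F$ and $\Vert\bm{v}^H\Delta\bm{Z}_k\bm{W}_{-k}\Vert_2\leq\Vert\bm{v}\Vert_2\Vert\Delta\bm{Z}_k\Vert_2\Vert\bm{W}_{-k}\Vert_F\leq\sqrt{ML}\,\epsilon_{c,k}\Vert\bm{W}_{-k}\Vert_F$, where I again invoke $\Vert\cdot\Vert_2\leq\Vert\cdot\Vert_F$ for the spectral-versus-Frobenius comparison. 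Adding these reproduces the stated right-hand side. I expect this second part to be routine once the error-norm estimate from the first part is in hand.
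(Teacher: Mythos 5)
Your proof is correct and follows essentially the same route as the paper's: the reverse triangle inequality combined with Cauchy--Schwarz bounds on the direct and cascaded error terms (using $\Vert\bm{v}\Vert_2=\sqrt{ML}$), explicit rank-one aligned perturbations to show tightness of the $\mathrm{C}_5$ bound in both the positive and zero cases, and triangle inequality plus submultiplicativity for $\mathrm{C}_6$. Your worst-case perturbation directions coincide with the paper's (indeed your phase bookkeeping on $\Delta\bm{h}_{d,k}$ is the more careful of the two), and your scaling argument for the zero case is a clean substitute for the paper's explicit formulas.
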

\begin{proof}
Please refer to Appendix A. \hfill $\blacksquare$
\end{proof}

We first equivalently rewrite constraint $\mathrm{C}_5$ by applying \emph{Lemma 1}. It follows
\begin{align}
\bar{\mathrm{C}}_5:\enspace \left \vert \left (\hat{\bm{h}}_{d,k}^H +\bm{v}^H \hat{\bm{Z}}_{k}\right )\bm{w}_{k} \right \vert - \epsilon_{k}\Vert \bm{w}_{k}\Vert_2 \geq \alpha_k,\enspace \alpha_k \geq 0,\enspace k=1,\cdots,K,
\end{align}
where $\epsilon_{k}\triangleq \epsilon_{d,k} +\sqrt{ML}\epsilon_{c,k}$ and we assume that $\left \vert  \left (\hat{\bm{h}}_{d,k}^H +\bm{v}^H \hat{\bm{Z}}_{k}\right )\bm{w}_{k} \right \vert \geq \epsilon_{k}\Vert \bm{w}_{k}\Vert_2 $. Although constraint $\bar{\mathrm{C}}_5$ is still nonconvex, we will show later that it is computationally efficient to deal with compared to solving large-scale LMIs. Similarly, according to \emph{Lemma 1}, we can also rewrite constraint $\mathrm{C}_6$ as
\begin{align}
\bar{\mathrm{C}}_6:\enspace \left \Vert (\hat{\bm{h}}_{d,k}^H +\bm{v}^H \hat{\bm{Z}}_{k})\bm{W}_{-k}\right \Vert_2+ \left(\epsilon_{d,k}+\sqrt{ML} \epsilon_{c,k} \right) \left \Vert \bm{W}_{-k} \right \Vert_F \leq \sqrt{\beta_k -\sigma_k^2}, \enspace k=1,\cdots,K.
\end{align}
It is worth noting that $\bar{\mathrm{C}}_6$ is a convex constraint.

Based on the above discussion, we recast the problem ($\mathrm{P1}$) as
\begin{align} \label{pp2}
(\mathrm{P}2):\enspace  \mathop{{\rm{maximize}}}_{\bm{W},\,\bm{v},\,\bm{\gamma}\,\bm{\alpha},\,\bm{\beta}} &\quad  \sum_{k=1}^K \log_2(1+\gamma_k) \nonumber \\
{\rm{s.t.}}&\quad {\rm{C}}_1,\enspace{\rm{C}}_2, \enspace \bar{\mathrm{C}}_3,\enspace \bar{\mathrm{C}}_5,\enspace \bar{\mathrm{C}}_6,\enspace \mathrm{C}_7.
\end{align}
Note that the problem ($\mathrm{P2}$) is still hard to solve due to some nonconvex constraints and the coupled variables, $\bm{W}$ and $\bm{v}$. But we are ready to introduce an effective solution based on AO in the next subsection.

\subsection{AO Algorithm for Worst-case Sum Rate Maximization in (P2)} 
For the coupled variables, we follow the popular AO framework \cite{zhaoya,zhaoya2} and optimize the precoding matrix, $\bm{W}$, and phase shifts, $\bm{v}$, in an alternating manner.

\subsubsection{Phase Shift Optimization}
Given the optimized precoding matrix, $\bm{W}$, we consider the optimization of phase shifts. The problem with respect to $\bm{v}$ reduces to
 \begin{align}\label{p3}
(\mathrm{P}3):\enspace\mathop{{\rm{maximize}}}_{\bm{v},\,\bm{\gamma},\,\bm{\alpha},\,\bm{\beta}} &\quad  \sum_{k=1}^K \log_2(1+\gamma_k) \nonumber \\
{\rm{s.t.}}&\quad {\rm{C}}_2, \enspace \bar{\mathrm{C}}_3,\enspace \bar{\mathrm{C}}_5,\enspace \bar{\mathrm{C}}_6,\enspace \mathrm{C}_7.
\end{align}

Firstly, the nonconvex modulus constraint $\mathrm{C}_2$ is a strict equality and is hard to deal with. To tackle this difficulty, we decompose constraint $\mathrm{C}_2$ into two equivalent constraints
\begin{align}
\mathrm{C}_{11}:\enspace \vert  [\bm{v}]_m \vert ^2 \leq 1,\enspace m=1,2,\cdots,ML,\nonumber \\
\mathrm{C}_{12}:\enspace  \vert [\bm{v}]_m \vert ^2 \geq 1,\enspace m=1,2,\cdots,ML.
\end{align}
Now, we find that constraint $\mathrm{C}_{11}$ is convex and $\mathrm{C}_{12}$ exhibits a nonconvex form. Moreover, constraints $\bar{\mathrm{C}}_3$ and $\bar{\mathrm{C}}_5$ are also nonconvex  difference of convex (DC) constraints. To handle these nonconvex constraints, we apply the P-CCP \cite{pccp} which is  proved to be effective in obtaining a feasible stationary point of the original problem \cite{ccp2}. Specifically, following the framework of P-CCP, we linearize the nonconvex parts of these constraints and reformulate them as
\begin{align}
&\quad  \tilde{\mathrm{C}}_3: \enspace\sum_{k=1}^K \frac{f_{\varpi}\left(\Vert \bm{w}_{n,k}\Vert_2^2\right) }{\left(1+\gamma_k^{(t)}\right)\mathrm{ln}2}\left( \gamma_k -\gamma_k^{(t)}\right)\leq\frac{C_n}{\xi_n}+[\bm{c}_1]_n,\enspace [\bm{c}_1]_n\geq 0, \enspace\forall n, \nonumber \\
&\quad \tilde{\mathrm{C}}_5: \enspace \frac{\Re\left\{\left( \hat{\bm{h}}_{d,k}^H + \bm{v}^{(t),H} \hat{\bm{Z}}_k \right)\bm{w}_k\bm{w}_k^H \left( \hat{\bm{h}}_{d,k}+  \hat{\bm{Z}}_k^H \bm{v} \right)\right\} }{\left\vert \left( \hat{\bm{h}}_{d,k}^H + \bm{v}^{(t),H} \hat{\bm{Z}}_k \right)\bm{w}_k\right \vert }-\epsilon_{k} \Vert \bm{w}_{k} \Vert_2\geq \alpha_k-[\bm{c}_2]_k,\enspace [\bm{c}_2]_k \geq 0,\,\forall k, \nonumber \\
&\quad \tilde{\mathrm{C}}_{11}:\enspace \vert  [\bm{v}]_m \vert ^2 \leq 1+[\bm{c}_3]_m,\enspace [\bm{c}_3]_m\geq 0, \enspace \forall m, \nonumber \\
&\quad \tilde{\mathrm{C}}_{12}:\enspace 2\Re \left \{ \left [\bm{v}^{(t)}\right ]_m^* [\bm{v}]_m \right \}-\left \vert \left [\bm{v}^{(t)}\right ]_m\right \vert^2\geq  1-[\bm{c}_3]_m,\enspace [\bm{c}_3]_m\geq 0, \enspace \forall m, 
\end{align}
where $\bm{c}_1$, $\bm{c}_2$, and $\bm{c}_3$ are newly introduced slack vectors with nonnegative elements.  Moreover, the penalty terms with respect to $\bm{c}_1$, $\bm{c}_2$, and $\bm{c}_3$ are imposed to the original objective function to penalize the violation of constraints, i.e., $\sum_{k=1}^K \log_2(1+\gamma_k)-\rho^{(t)}\left ( \Vert \bm{c}_1 \Vert_1+\Vert \bm{c}_2 \Vert_1 +\Vert \bm{c}_3 \Vert_1\right)$. In this way, after some basic manipulations, we obtain the transformed  problem as
\begin{align}\label{optv}
\mathop{{\rm{maximize}}}_{\bm{v},\,\bm{\gamma},\,\bm{\alpha},\,\bm{\beta}} &\quad \sum_{k=1}^K \log_2(1+\gamma_k)-\rho^{(t)}\left ( \Vert \bm{c}_1 \Vert_1+\Vert \bm{c}_2 \Vert_1 +\Vert \bm{c}_3 \Vert_1\right)\nonumber \\
{\rm{s.t.}}&\quad  \tilde{\mathrm{C}}_3,\enspace \tilde{\mathrm{C}}_5,\enspace \bar{\mathrm{C}}_6,\enspace \mathrm{C}_7,\enspace \tilde{\mathrm{C}}_{11},\enspace \tilde{\mathrm{C}}_{12},
\end{align}
which is a convex problem and thus can be efficiently solved by existing numerical convex program solvers, e.g., CVX tools \cite{cvxtool}. Moreover, $\rho^{(t)}$ is the penalty parameter in the $t$-th iteration for controlling the feasibility of the constraints. There exists an upper limit of $\rho^{(t)}$, i.e., $\rho^{(t)}\leq \rho^{\mathrm{max}}$, which is used to avoid numerical unstable problems.

In addition, it is worth noting that violation of the constraints is allowed due to the introduction of slack vectors in the initial stage of this algorithm. It is usually challenging to construct a feasible initial point due to the backhaul constraints. Therefore, we exploit the P-CCP not only for nonconvex constraints reformulation but also for effective initialization, which prompts this algorithm with efficient implementation.

We summarize the P-CCP for solving the problem $(\mathrm{P3})$ in Algorithm 1. It should be pointed out that for a sufficiently small $\varphi_1$, the violation is small enough such that the nonconvex constraint is satisfied when the iteration is terminated. Furthermore, the update of the penalty parameter follows the philosophy that a smaller $\rho^{(t)}$ is selected to pursue better performance at the initial stage and the nonconvex constraint is satisfied by increasing $\rho^{(t)}$ later. The algorithm is still guaranteed to converge to a stationary point \cite{ccp2} even though the objective function obtained in the iterative process is not strictly monotonic. 

\begin{algorithm}[!t]
\caption{P-CCP for Phase Shift Optimization}
\begin{algorithmic}[1]  
\STATE \textbf{Initialize} Iteration number $t=0$, feasibility tolerances $\varphi_1>0$ and $\varphi_2>0$, scale factor $\mu>1$, and penalty parameter $\rho^{(0)}$. Initialize $\bm{v}^{(0)}$, and $\bm{\gamma}^{(0)}$.
\STATE \textbf{Repeat}
\STATE $\quad$Solve the problem in (\ref{optv}). Output the solutions $\bm{v}^{(t+1)}$, $\bm{\gamma}^{(t+1)}$, and the slack vectors $\bm{c}_1$, $\bm{c}_2$, $\bm{c}_3$.
\STATE $\quad$Update $\rho^{(t+1)}=\min\left\{ \mu\rho^{(t)},\rho^{\mathrm{max}} \right \}$.
\STATE $\quad$Update  $t=t+1$.
\STATE \textbf{End if} $\Vert \bm{c}_1\Vert_1 +\Vert \bm{c}_2\Vert_1+\Vert \bm{c}_3\Vert_1\leq \varphi_1$ and $\left \Vert \bm{v}^{(t)}- \bm{v}^{(t-1)}\right \Vert_2 \leq \varphi_2$.
\end{algorithmic} 
\end{algorithm}

\subsubsection{Precoding Optimization}
While fixing the phase shifts, the optimization problem of the precoding matrix, $\bm{W}$, reduces to
\begin{align} \label{p2}
(\mathrm{P}4):\enspace \mathop{{\rm{maximize}}}_{\bm{W},\,\bm{\gamma},\,\bm{\alpha},\,\bm{\beta}} &\quad  \sum_{k=1}^K \log_2(1+\gamma_k) \nonumber \\
{\rm{s.t.}}&\quad {\rm{C}}_1, \enspace \bar{\mathrm{C}}_3,\enspace \bar{\mathrm{C}}_5,\enspace \bar{\mathrm{C}}_6,\enspace \mathrm{C}_7.
\end{align}

Firstly, the constraint in $\bar{\mathrm{C}}_3$ involves a bilinear function of $\bm{W}$ and $\bm{\gamma}$. By exploiting the fact that $4xy=(x+y)^2-(x-y)^2$ and introducing two sets of auxiliary variables $\bm{a}=\{a_{n,k}\vert \forall n,k\}$ and $\bm{b}=\{b_k\vert \forall k\}$ \cite{chen}, we equivalently rewrite $\bar{\mathrm{C}}_3$ as
\begin{align}
& {\rm{C}}_8:\enspace \sum_{k=1}^K \left[(a_{n,k}+b_k)^2-(a_{n,k}-b_k)^2\right]\leq\frac{4C_n}{\xi_n},\enspace a_{n,k},b_k\geq0,\enspace n=1,2,\cdots,N, \nonumber\\
& {\rm{C}}_9:\enspace f_{\varpi}\left (\Vert \bm{w}_{n,k}\Vert_2^2\right)\leq a_{n,k}, \enspace n=1,2,\cdots,N,\enspace k=1,2,\cdots,K, \nonumber\\
& {\rm{C}}_{10}:\enspace\log_2(1+\gamma_k)\leq b_k, \enspace k=1,2,\cdots,K.
\end{align}
Due to the existence of concave parts, constraints $\mathrm{C}_{8-10}$ are still nonconvex. To tackle the concave parts in constraints $\bar{\mathrm{C}}_5$, $\mathrm{C}_{8-10}$, we exploit the SCA technique to obtain a suboptimal solution. Specifically, through the Taylor expansion,  the concave parts can be replaced by their first-order upper bounds and then these nonconvex constraints become
\begin{align}\label{linear}
&{\hat{\mathrm{C}}}_5:\enspace \frac{\Re \left\{\bm{w}_k^{(t)}\hat{\bm{h}}_k \hat{\bm{h}}_k^H \bm{w}_k \right\}}{\left \vert \hat{\bm{h}}_k^H \bm{w}_k^{(t)} \right \vert }-\epsilon_{k} \Vert \bm{w}_{k} \Vert_2\geq \alpha_k, \enspace \alpha_k \geq 0,\enspace \forall k,\nonumber \\
&{\hat{\mathrm{C}}}_8:\enspace \sum_{k=1}^K \left[(a_{n,k}+b_k)^2 -2\left(a_{n,k}^{(t)}-b_k^{(t)} \right) (a_{n,k}-b_k)+\left(a_{n,k}^{(t)}-b_k^{(t)} \right)^2\right]\leq \frac{4C_n}{\xi_n},\enspace a_{n,k},b_k\geq0,\enspace \forall n, \nonumber\\
&{\hat{\mathrm{C}}}_9:\enspace f_{\varpi}\left(\Vert \bm{w}_{n,k}^{(t)}\Vert_2^2 \right) +\nabla  f_{\varpi}\left(\Vert \bm{w}_{n,k}^{(t)}\Vert_2^2 \right) \left(\Vert \bm{w}_{n,k}\Vert_2^2-\Vert \bm{w}_{n,k}^{(t)}\Vert_2^2\right)\leq a_{n,k},\enspace \forall n,k,\nonumber\\
&{\hat{\mathrm{C}}}_{10}:\enspace \frac{1}{\left(1+\gamma_k^{(t)}\right)\mathrm{ln}2}\left( \gamma_k -\gamma_k^{(t)}\right)\leq b_k,\enspace \forall k,
\end{align}
where $\hat{\bm{h}}_k \triangleq \hat{\bm{h}}_{d,k}+\hat{\bm{Z}}_{k}^H \bm{v}$ and $\bm{w}_k^{(t)}$, $\gamma_k^{(t)}$, $a_{n,k}^{(t)}$, and $b_k^{(t)}$are the optimal solutions obtained from the $t$-th iteration. \par

Now we are able to formulate the optimization problem in the $t$-th iteration for solving $(\mathrm{P}4)$ as the following convex problem, i.e.,
\begin{align}\label{optw}
\mathop{{\rm{maximize}}}_{\bm{W},\,\bm{\gamma},\,\bm{\alpha},\,\bm{\beta},\,\bm{a},\,\bm{b}} &\quad  \sum_{k=1}^K \log_2(1+\gamma_k) \nonumber \\
{\rm{s.t.}}&\quad {\rm{C}}_1, \enspace \hat{\mathrm{C}}_5,\enspace \bar{\mathrm{C}}_6,\enspace\mathrm{C}_7,\enspace \hat{\mathrm{C}}_8,\enspace \hat{\mathrm{C}}_9,\enspace \hat{\mathrm{C}}_{10}.
\end{align}
To summarize, we conclude the SCA algorithm for solving  $(\mathrm{P4})$ in Algorithm 2. Moreover, the feasible initial solutions of $\bm{W}^{(0)}$, $\bm{\gamma}^{(0)}$, $\bm{a}^{(0)}$, and $\bm{b}^{(0)}$ are constructed based on the optimization results of problem  $(\mathrm{P3})$. It is worth noting that the proposed SCA-based algorithm  is guaranteed to converge to a KKT solution of the nonconvex problem $(\mathrm{P4})$ \cite{sca}.
\begin{algorithm}[!t]
\caption{SCA for Precoding Optimization}
\begin{algorithmic}[1]  
\STATE \textbf{Initialize} Iteration number $t=0$ and feasibility tolerances $\varepsilon>0$. Initialize $\bm{W}^{(0)}$, $\bm{\gamma}^{(0)}$, $\bm{a}^{(0)}$, and $\bm{b}^{(0)}$ and calculate the initial objective value $\mathrm{Obj}^{(0)}$.
\STATE \textbf{Repeat}
\STATE $\quad$Solve problem in (\ref{optw}). Output the solutions $\bm{W}^{(t+1)}$, $\bm{\gamma}^{(t+1)}$, $\bm{a}^{(t+1)}$, and $\bm{b}^{(t+1)}$, and the objective value $ \mathrm{Obj}^{(t+1)}$.
\STATE $\quad$Update $t=t+1$.
\STATE \textbf{End if} $\left \vert {\rm{Obj}}^{(t)}-{\rm{Obj}}^{(t-1)} \right \vert\bigg/{\rm{Obj}}^{(t)} \leq \varepsilon $.
\end{algorithmic} 
\end{algorithm}

\subsubsection{Refinement}
It is worth mentioning that the approximation of $l_0$-norm in (\ref{appro}) is not very accurate in some special values and may lead to infeasible solutions. Specifically, if the $n$th AP does not serve the $k$th user, the obtained $\Vert \bm{w}_{n,k}\Vert_2^2$ via optimization would not be exactly zero and is replaced by a sufficiently small value, which may results in the violation of backhaul constraint at the $n$th AP. Hence, we perform the refinement similar to that in \cite{chen} to guarantee the feasibility of the solution. \par

In detail, we fix the user cluster for each AP according to the solutions obtained via AO iterations, and then perform another AO iteration to ensure a feasible solution. Accordingly, based on the obtained precoding matrix, we denote the user cluster at the $n$th AP by $\mathcal{S}_n=\left\{k\vert \Vert \bm{w}_{n,k}\Vert_2^2\geq \mu_{\mathrm{th}},\enspace \forall k \right\}$, where $\mu_{\mathrm{th}}$ is a predetermined small-value threshold. Accordingly, we reformulate constraint $\mathrm{C}_3$ as
\begin{align}
\mathrm{C}_{13}:\enspace \sum_{k\in \mathcal{S}_n } \log_2(1+\gamma_k) \leq \frac{C_n}{\xi_n},\enspace n=1,2,\cdots,N.
\end{align}
Hence, we reformulate the phase shift optimization subproblem $(\mathrm{P}3)$ as
\begin{align}\label{pr2}
(\mathrm{P}5):\enspace \mathop{{\rm{maximize}}}_{\bm{v},\,\bm{\gamma}\,\bm{\alpha},\,\bm{\beta}} &\quad  \sum_{k=1}^K \log_2(1+\gamma_k) \nonumber \\
{\rm{s.t.}}&\quad {\rm{C}}_2,\enspace \bar{\mathrm{C}}_5,\enspace \bar{\mathrm{C}}_6,\enspace \mathrm{C}_7, \enspace \mathrm{C}_{13},
\end{align}
which can also be handled with P-CCP. Similarly, the optimization subproblem of precoding matrix in $(\mathrm{P}4)$ is rewritten as 
\begin{align} \label{pr1}
(\mathrm{P}6):\enspace \mathop{{\rm{maximize}}}_{\bm{W},\,\bm{\gamma}\,\bm{\alpha},\,\bm{\beta}} &\quad  \sum_{k=1}^K \log_2(1+\gamma_k) \nonumber \\
{\rm{s.t.}}&\quad {\rm{C}}_1 ,\enspace \bar{\mathrm{C}}_5,\enspace \bar{\mathrm{C}}_6,\enspace \mathrm{C}_7, \enspace \mathrm{C}_{13}.
\end{align}
Note that the SCA technique is also useful for solving $(\mathrm{P}6)$.  By solving $(\mathrm{P}5)$ and $(\mathrm{P}6)$ alternatively, we arrive at a strictly feasible solution. The description of this algorithm to solve $(\mathrm{P}2)$ is elaborated in Algorithm 3.
\begin{algorithm}[!t]
\caption{AO-based Algorithm to Solve Problem $(\mathrm{P}2)$}
\begin{algorithmic}[1]  
\STATE \textbf{Initialize} Iteration number $i=0$, feasibility tolerances $\varepsilon$. Initialize $\bm{W}^{(0)}$ and $\bm{v}^{(0)}$.
\STATE \textbf{Repeat}
\STATE $\quad$Given $\bm{W}^{(i)}$, solve $(\mathrm{P}3)$ via P-CCP and output the solution  $\bm{v}^{(i+1)}$.
\STATE $\quad$Given $\bm{v}^{(i+1)}$, solve $(\mathrm{P}4)$ via SCA and output the solution $\bm{W}^{(i+1)}$ and the objective value $\mathrm{Obj}^{(i+1)}$.
\STATE $\quad$Update $i=i+1$.
\STATE \textbf{End if} $\left \vert {\rm{Obj}}^{(i-1)}-{\rm{Obj}}^{(i-2)} \right \vert\bigg/{\rm{Obj}}^{(i-1)} \leq \varepsilon $.
\STATE \textbf{Refinement}:
\STATE $\quad$Determine the user cluster for each AP based on $\bm{W}^{(i)}$.
\STATE$\quad$Solve  $(\mathrm{P}5)$ via P-CCP and  $(\mathrm{P}6)$ via SCA.
\end{algorithmic} 
\end{algorithm}

\subsection{Complexity Analysis}
According to \cite{pan}, since all the resulting convex problems involving second order cone (SOC) and affine constraints, a standard interior-point method (IPM) is effective to solve these problems. Based on the complexity analysis of IPM, the general computational complexity is given by
\begin{align}\label{complex}
\mathcal{O}\left (\sqrt{2I} \left(n^3+n^2 \sum_{i=1}^I a_i^2\right )\right ),
\end{align}
where $n$ is the number of variables, and $I$ is the number of SOC of size $a_i$. We consider a typical setup where the total number of reflecting elements is much larger than the number of users, i.e., $ML\gg K$. Then in Table \ref{table_complex}, we summarize the complexity analysis of solving these problems.  From this table, it concludes that the computational complexity of the proposed robust design is at the order of 
\begin{align}\label{c1}
\mathcal{O} \left(K^{5.5}N^{4.5}N_t^4+ KM^{4.5}L^{4.5} \right).
\end{align}
On the other hand, in the S-procedure based method \cite{pan,hu}, the constraints $\bar{\mathrm{C}}_5$ and $\bar{\mathrm{C}}_6$ are LMIs with size $(ML+1)NN_t+1$ and $K+2NN_t$, respectively. Hence, the computational complexity of the S-procedure based method is at the order of 
\begin{align}\label{c2}
\mathcal{O} \left(K^{2.5}N^{4.5}N_t^4 M^3 L^3+ KN^3 N_t^3 M^{4.5}L^{4.5} \right).
\end{align}
By comparing (\ref{c1}) and (\ref{c2}), it is obvious that the proposed approximation method has a much lower complexity.

\begin{table*}[!t]
\caption{Detailed complexity analysis of the proposed algorithm}
\label{table_complex}
    \centering
	\begin{tabular}{|p{1.8cm}<{\centering}|p{2.7cm}<{\centering}|p{1.8cm}<{\centering}|p{1.8cm}<{\centering}|p{1.8cm}<{\centering}|p{1.8cm}<{\centering}|p{1.8cm}<{\centering}|}	
        \hline
		\multirow{5}*{\makecell{Phase shift\\Optimization\\$(\mathrm{P}3)$, $(\mathrm{P}5)$}} & SOC constraints &  $\tilde{\mathrm{C}}_5$ &  $\bar{\mathrm{C}}_6$ & $\tilde{\mathrm{C}}_{11}$ & $\tilde{\mathrm{C}}_{12}$ & / \tabularnewline
		\cline{2-7}
		~&Number of SOCs &  $K$ & $K$ & $ML$ &  $ML$&/\tabularnewline
		\cline{2-7} 
		~&Size &  $ML$&  $ML$ &  $1$ &  $1$ &/\tabularnewline
		\cline{2-7}
		~&Number of variables &  \multicolumn{5}{c|}{$ML$}\tabularnewline
		\cline{2-7}
		~&Overall complexity &  \multicolumn{5}{c|}{$\mathcal{O} \left( KM^{4.5}L^{4.5} \right)$}\tabularnewline
\hline
		\multirow{5}*{\makecell{Precoding\\Optimization\\$(\mathrm{P}4)$, $(\mathrm{P}6)$}} & SOC constraints & $\mathrm{C}_1$ & $\hat{\mathrm{C}}_5$&  $\bar{\mathrm{C}}_6$ &  $\hat{\mathrm{C}}_8$
&  $\hat{\mathrm{C}}_9$ \\
		\cline{2-7}
		~&Number of SOCs &$N$ &  $K$ &  $K$ & $N$ & $NK$\tabularnewline
		\cline{2-7} 
		~&Size &  $KN_t$ & $NN_t$&  $(K-1)NN_t$ & $K$ & $N_t$\tabularnewline
		\cline{2-7}
		~&Number of variables &  \multicolumn{5}{c|}{$KNN_t+KN+K$}\tabularnewline
		\cline{2-7}
		~&Overall complexity &  \multicolumn{5}{c|}{$\mathcal{O} \left(K^{5.5}N^{4.5}N_t^4 \right)$}\tabularnewline
        \hline
	\end{tabular}%
\end{table*}

\section{Simulation Results}
\subsection{Simulation Setup}
In this section, we provide simulation results to demonstrate the effectiveness of the proposed algorithms. We consider a system with the topology as depicted in Fig. 2. Four APs located at $(100,0), (-100,0), (0,100)$, and $(0,-100)$, respectively, are deployed in this network. Moreover, four RISs are randomly dropped with a uniform distribution in this area of interest, i.e., following the Poisson point process. The $K$ users are randomly dropped in a circle with radius $R$. The heights of APs, RISs, and users are set to 20\,m, 5\,m, and 1.5\,m, respectively.  The channels between each AP and the users are assumed as Rayleigh channels, while the cascaded channels are modelled as Rician channels, which are given by
\begin{equation}
\bm{H}=\sqrt{\beta} \left (\sqrt{\frac{\kappa}{\kappa+1}}\bm{H}^{{\rm{LoS}}}+ \sqrt{\frac{1}{\kappa+1}}\bm{H}^{{\rm{NLoS}}}\right ),
\end{equation}
where $\beta$ denotes the large-scale path loss, $\kappa \geq 0$ is the Rician factor, $\bm{H}^{{\rm{LoS}}}$ and $\bm{H}^{{\rm{NLoS}}}$ represent the line-of-sight (LoS) component and non-LoS (NLoS) component, respectively. The columns of $\bm{H}^{{\rm{NLoS}}}$ follow the complex Gaussian distribution with a zero mean and unit variance. Note that Rayleigh channels contain only NLoS components. In addition, when the distance between RIS and user exceeds a certain value, we assume that due to the presence of obstacles, there is no LoS path. The large-scale path loss $\beta$ is formulated as $\beta=\beta_0 \left(d/d_0 \right)^{-\alpha}$, where $d_0=1\,$m is the reference distance, $\beta_0=-30\,$dB is the path loss at the reference distance, and $\alpha$ is the path-loss exponent\cite{pan3}. The path-loss exponents of the Rician channels and Rayleigh channels are denoted by $\alpha_{1}$ and $\alpha_{2}$, respectively. For ease of presentation, we define $\delta_{d}$, $\delta_{c}$  as the uncertainty level of direct and cascaded channels, respectively, which follow $\delta_d=\Vert\Delta \bm{h}_{d,k}\Vert_2 /\Vert \bm{h}_{d,k} \Vert_2$ and $\delta_c=\Vert\Delta \bm{Z}_{k}\Vert_F /\Vert \bm{Z}_{k} \Vert_F$, $\forall k$. For simplicity, we set the maximum power of all the APs to be $P$, and the maximum capacity of all backhauls to be $C$. Unless otherwise specified, the other parameters are listed in Table \ref{table:1}.

\begin{table}[!t]   
\begin{center}   
\caption{Simulation Parameters}  
\label{table:1} 
\begin{tabular}{|l|c|l|c|}   
\hline    Number of transmit antennas of each AP, $N_t$ &4& Parameter of backhaul capacity margin $\xi_n,\enspace \forall n$ & 1.1 \\ 
\hline    Number of reflecting elements of each RIS, $M$ &16 &Parameter of approximation accuracy, $\varpi$  &$10^{-3}$ \\  
\hline    Number of users, $K$ &4 &SCA and AO convergence tolerance, $\varepsilon$ & $10^{-4}$ \\
\hline    Cell radius, $R$ & $100\,$m&Thresholds for P-CCP convergence, $\varphi_1$ and $\varphi_2$ &$10^{-3}$ \\
\hline    Available bandwidth & $10\,$MHz&Threshold for refinement, $\mu_{\mathrm{th}}$ &$10^{-3}$ \\
\hline    Noise power, $\sigma_k^2,\enspace \forall k$ & $-80\,$dBm &Rician factor, $\kappa$ &3 \\
\hline    Maximum transmit power, $P$& $30\,$dBm& Path-loss exponent of direct channels, $\alpha_1$ &3.75\\
\hline    Maximum backhaul capacity, $C$& $200\,$Mbps& Path-loss exponent of cascaded channels, $\alpha_2$ &2.2\\
\hline
\end{tabular}   
\end{center}   
\end{table}

\begin{figure}[!t]
\centering
\includegraphics[width=3.6in]{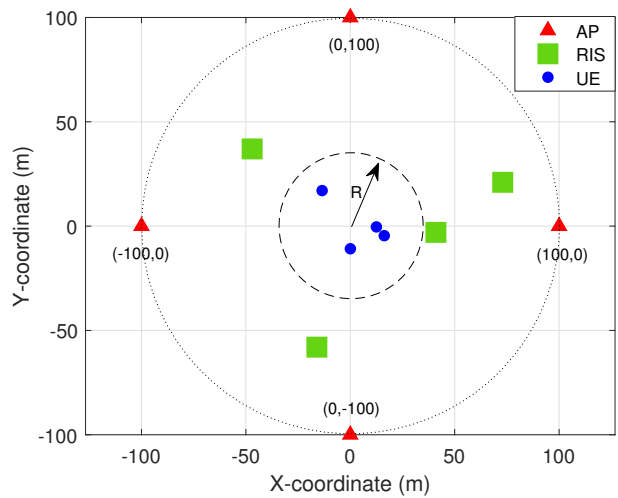}
\caption{ The simulated RIS-aided CF system.}
\end{figure}

We mainly compare our proposed algorithms with the following five baseline schemes.
\begin{itemize}
\item Non-robust RIS-CF: the imperfect CSI is treated as perfect for beamforming design to highlight the potential performance degradation caused by CSI imperfectness.
\item RandPhase: we only optimize the precoding vectors and randomly select the phase shifts at RISs to evaluate the importance of phase shift optimization.
\item CF w/o RIS \cite{cf1}: we consider a traditional CF system without the deployment of RISs.
\item SC-CF \cite{cf2}: we consider the conventional small-cell scheme with the assistance of RISs, in which each AP serves only serves a specific user.
\item Centralized BS \cite{zhaoya}: we consider a centralized base station (BS) to serve all the users in the area of interest without any backhaul constraints.
\end{itemize}

\subsection{Convergence of the Proposed Algorithm}
We first consider verifying the convergence of the proposed Algorithm 3 in Fig. 3. Here, we set that $\delta_d=\delta_c=\delta$, $N=4$, $N_t=2$, $M=10$, and $L=2$. As a comparison, we consider a baseline method where constraints $\bar{\mathrm{C}}_5$ and $\bar{\mathrm{C}}_6$ are replaced by large-scale LMIs generated by the S-procedure \cite{pan,hu}. Firstly, it is observed that the proposed algorithm is much more computational efficient and it is order-of-magnitude faster than the S-procedure-based method. On the other hand, compared with the S-procedure-based method, the worst-case sum rate obtained by the proposed algorithm is almost the same. For instance, a marginal performance loss of  6\% is observed  when $\delta=0.1$. It is verified that the proposed method achieves comparable performance with the S-procedure-based method even for relatively large uncertainty regions. Especially for stringent backhaul constraint of $C=100$ Mbps, both methods exhibit almost the same performance. These results further validates that the proposed algorithm achieves excellent performance with much lower complexity and faster convergence especially in the the RIS-aided CF system with strict backhaul constraints.

\begin{figure}[!t]
\centering
\includegraphics[width=3.3in]{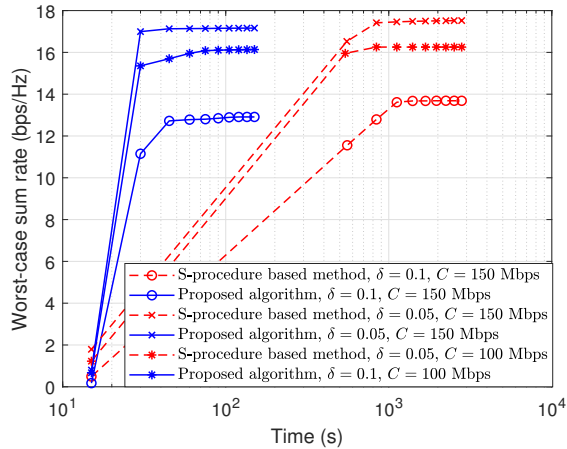}
\caption{ Convergence of the proposed algorithm.}
\end{figure}

\subsection{Impact of the Maximum Transmit Power of AP}
Next, we focus on the impact of the maximum transmit power of each AP in Fig. 4. The channel uncertainties are setted as $\delta_d=0.02$ and $\delta_c=0.04$. As shown in Fig. 4, with the increase of the transmit power of each AP, the performance of each scheme enhances significantly and the proposed algorithm consistently outperforms all the baseline schemes. Compared with non-robust design, we find that the performance gap is enlarged with an increasing power budget, which highlights the importance of robust design against CSI errors. By contrast, it is seen that performance gap between the proposed algorithm and RandPhase algorithm decreases with the increases of the transmit power. This is because the backhaul capacity is close to the upper bound when the transmit power is sufficient large and phase shift optimization can only offer marginal gain. Moreover, compared with the traditional CF system without RISs, a large number of independent controllable paths are created through the deployment of RISs in the proposed scheme and thus bring significant diversity gain. Furthermore, the cooperative APs are much more robust to dynamic wireless environment than the uncooperative ones in small-cell systems, which results in better performance. However, under the high transmit power budget, the performance gap is reduced. Indeed, each AP in small-cell systems only serves a specific user and the backhaul constraints are more relaxed compared with the RIS-aided CF systems, which result in the greater performance growth.

\begin{figure}[!t]
\centering
\includegraphics[width=3.3in]{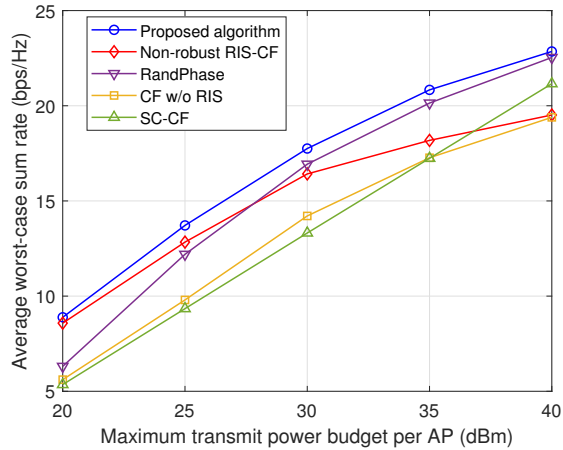}
\caption{ Average worst-case sum rate versus the maximum transmit power of each AP.}
\end{figure}

\subsection{Impact of Backhaul Capacity}

\begin{figure}[!t]
\centering
\includegraphics[width=3.3in]{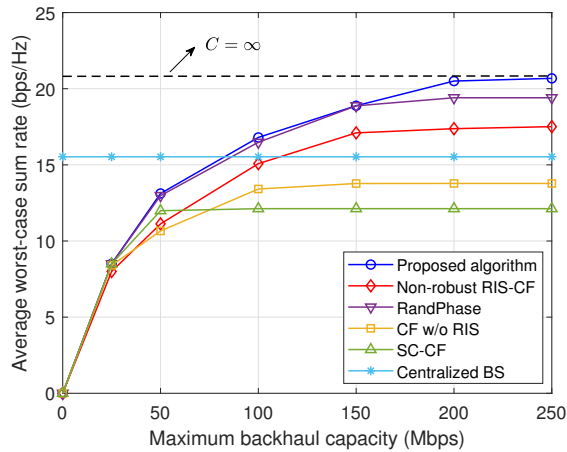}
\caption{ Average worst-case sum rate versus the maximum backhaul capacity.}
\end{figure}

In Fig. 5, we  depict the average worst-case sum rate versus the maximum backhaul capacity with $\delta_d=0.02$ and $\delta_c=0.04$. For all the considered schemes, with the increase of backhual capacity, the average worst-case sum rate first increases rapidly and then gradually tends to an upper bound. With a small backhaul capacity, only a small amount data can be conveyed to all the APs for effective beamforming or each AP can only serve a subset of users. In contrast, under a sufficient capacity budget, all the APs are able to participate in the service for each user and the worst-case sum rate is mainly limited by the transmit power and the CSI errors. Note that due to the limited backhaul, there is little performance difference between the proposed algorithm and RandPhase algorithm until the backhaul capacity is sufficient. Indeed, with more stringent backhaul constraints, each AP can only serve a specific user, in which case there does not exist any difference between the proposed CF system and small-cell system. Hence, this explains why the proposed scheme has similar performance to small-cell system when the maximum backhaul capacity is less than $50\,$Mbps. In addition, when the backhaul capacity $C$ is less than 80 Mbps, the centralized BS exhibits a superior performance as an upper bound. On the other hand, the CF architecture is preferred with adequate backhaul capacity.

\subsection{Impact of CSI Uncertainty}
\begin{figure}[!t]
\subfigure[]{
\begin{minipage}[t]{1\linewidth}
  \centering
  \includegraphics[width=3.3in]{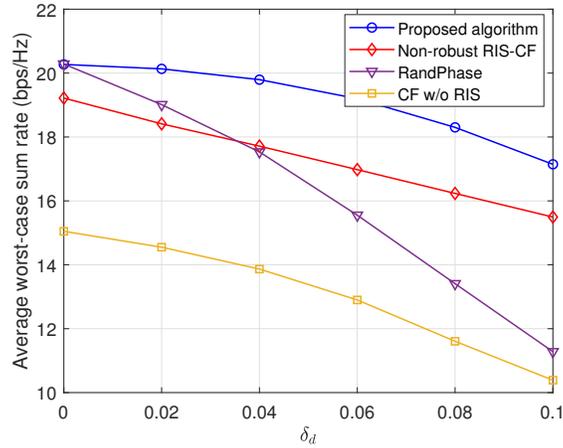}
 \end{minipage}
}

\subfigure[]{
\begin{minipage}[t]{1\linewidth}
  \centering
  \includegraphics[width=3.3in]{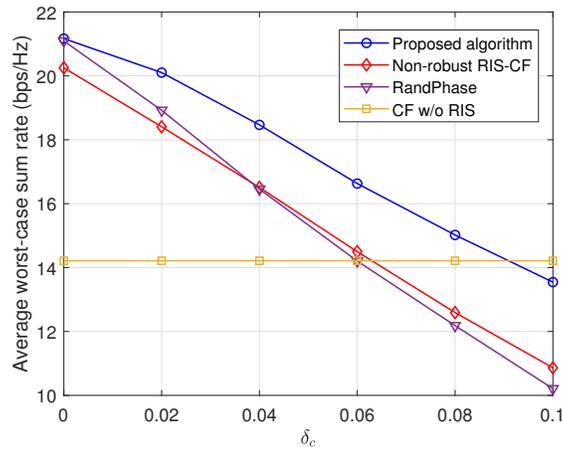}
 \end{minipage}
}
\caption{ (a) Average worst-case sum rate versus the uncertainty level of direct channels; (b) Average worst-case sum rate versus the uncertainty level of cascaded channels.}
\end{figure}
Fig. 6 depicts the average worst-case sum rate versus the uncertainty level of both the direct and cascaded channels. We set the maximum backhaul capacity is $200\,$Mbps. It is observed that when the uncertainty level of the direct channel $\delta_d$ increases from 0 to 0.1, the worst-case sum rate of the proposed algorithm decreases by only about 20 percent. By contrast, with the decrease of uncertainty level of the cascade channel $\delta_c$, the worst-case sum rate decreases sharply. This is because the number of reflecting elements at RISs is much larger than the number of antennas at the APs. As a result, the CSI error of the cascaded channel dominates the potential diversity gain and causes a more severe performance loss. The result illustrates that the uncertainty level of the cascaded channel has a greater impact on the worst-case sum rate than that of the direct channel. In other words, even if we have accurate estimation of the direct channel, a large uncertainty level of the cascaded channel can lead to a sharp decline in the sum rate. 

Moreover, we find that although RandPhase algorithm has similar performance as the proposed algorithm when the uncertainty level is low, its performance deteriorates severely as the uncertainty level of CSI error increases. Indeed, the performance of RandPhase algorithm is sensitive to the CSI errors of both the direct channel and the cascaded channel, as the degraded CSI estimation quality magnifies the mismatches in resource allocation. This illustrates the necessity of RIS phase shift optimization under large CSI errors. Then, compared with the traditional CF system, it can be seen that when $\delta_c$ increases to 0.09, the deployment of RISs cannot bring any performance gain and even plays a negative role. This further shows that having accurate CSI of cascaded channels is the key to unlock the potential of RIS.

\subsection{Number of Reflecting Elements}
\begin{figure}[!t]
\centering
\includegraphics[width=3.3in]{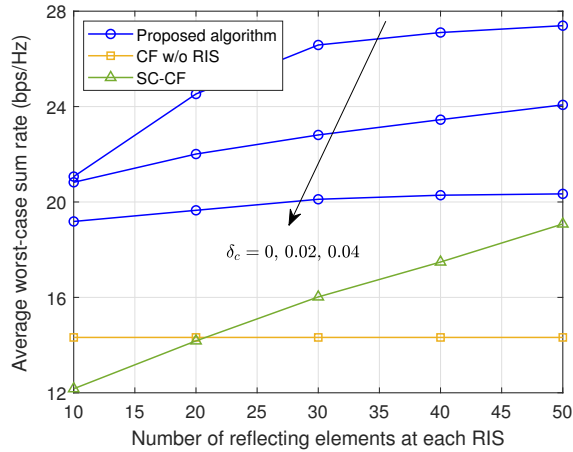}
\caption{  Average worst-case sum rate versus the number of reflecting elements.}
\end{figure}

In Fig. 7, the relationship between the worst-case sum rate and the number of reflection elements at each RIS is discussed. The uncertainty level of the direct channel $\delta_d$ is fixed as 0.02 in the following simulations because its impact on worst-case sum rate is relatively small. For the schemes with the deployment of the RISs, the performance increases monotonically as the number of reflecting elements increases. However, for larger CSI errors, i.e., $\delta_c=0.04$, the performance gain in having more reflecting elements is limited. In fact, the diversity gain brought by increasing the number of RIS reflecting elements is neutralized by the performance degradation brought by the larger CSI errors. Therefore, it is not necessarily economical to employ a large number of reflecting elements in the case of large CSI errors. Moreover, despite the availability of perfect CSI, i.e., $\delta_c=0$, when sufficient reflecting elements have been deployed, the performance gain introduced by increasing the number of reflecting elements is still limited. This is because the limited backhaul capacity hinders the growth of the worst-case sum rate. Hence, even if we have accurate CSI, increasing the number of reflecting elements may not introduce a significant improvement in performance under a low backhaul capacity.\par

\subsection{Number of RISs}
\begin{figure}[!t]
\centering
\includegraphics[width=3.3in]{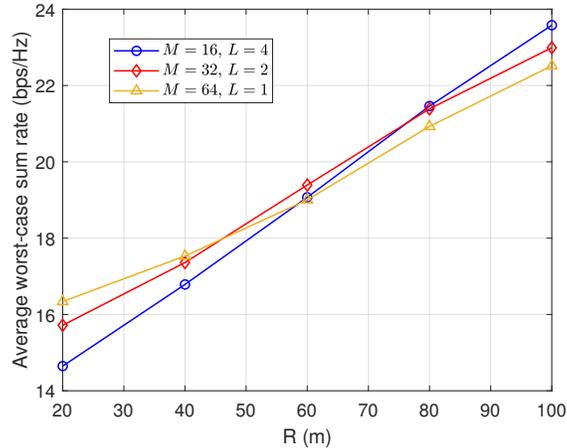}
\caption{ Average worst-case sum rate versus the number of RISs.}
\end{figure}

In Fig. 8, we investigate the influence brought by the number of RISs when the total number of reflecting elements is fixed. We can find that when the users are relatively close to each other, i.e., when the radius $R$ is small, the centrally deployed RIS demonstrates obvious performance advantages over distributed RISs. By contrast, when the user distribution becomes more and more dispersed over the service area, adopting distributed RISs can offer a higher worst-case sum rate. In particular, distributed RISs are likely to create more strong end-to-end LoS paths when the users are scattered in a larger area \cite{gao}.

\section{Conclusion}
In this paper, we investigated the robust beamforming design for a RIS-aided CF system with the consideration of CSI uncertainties at the transmitter and the capacity-limited backhaul. The precoding at the APs and the phase shifts at the RISs were jointly optimized for maximizing the worst-case sum rate. To address the nonsmooth constraints and semi-infinite constraints, we proposed a computational-efficient transformation scheme to pave the way for the development of an iterative suboptimal algorithm based on AO. The the P-CCP and the SCA method were exploited for  RIS phase shift and precoding optimization, respectively. Numerical results confirmed excellent performance of the proposed algorithm in the presence of channel errors and further show the importance of the cascaded channel estimation and RIS phase shift optimization, together with the advantages of decentralized deployment of RISs in CF systems. Additionally, beamforming designs considering a more realistic cascaded channel model \cite{george5} and relying on less CSI \cite{george6} should be of our interest in the future work.

\appendices
\section{Proof of Lemma 1}
By applying the triangle inequality and the Cauchy-Schwarz inequality to the LHS of constraint $\mathrm{C}_5$, we obtain
\begin{align}\label{obj}
&\left \vert  \left ((\hat{\bm{h}}_{d,k}+\Delta \bm{h}_{d,k} )^H +\bm{v}^H (\hat{\bm{Z}}_{k}+ \Delta \bm{Z}_{k})\right )\bm{w}_{k} \right \vert \nonumber \\
&\overset{\text{(a)}}{\geq} \left \vert  \left (\hat{\bm{h}}_{d,k}^H +\bm{v}^H \hat{\bm{Z}}_{k}\right )\bm{w}_{k}\right \vert -\left \vert \left (\Delta \bm{h}_{d,k}^H +\bm{v}^H \Delta \bm{Z}_{k}\right )\bm{w}_{k} \right \vert \nonumber\\
&\overset{\text{(b)}}{\geq} \left \vert  \left (\hat{\bm{h}}_{d,k}^H +\bm{v}^H \hat{\bm{Z}}_{k}\right )\bm{w}_{k} \right \vert -\left (\epsilon_{d,k}+ \left \Vert \bm{v}^H \Delta \bm{Z}_{k}  \right \Vert_2 \right )\Vert \bm{w}_{k} \Vert_2\nonumber \\
&\overset{\text{(c)}}{\geq} \left \vert  \left (\hat{\bm{h}}_{d,k}^H +\bm{v}^H \hat{\bm{Z}}_{k}\right )\bm{w}_{k} \right \vert -\left (\epsilon_{d,k}+ \sqrt{ML}\epsilon_{c,k}  \right )\Vert \bm{w}_{k} \Vert_2
\end{align}
where the inequality in (a) is due to the triangle inequality,  the inequality in (b) comes from the Cauchy-Schwarz inequality and $\Vert \Delta \bm{h}_{d,k} \Vert_2\leq \epsilon_{d,k}$, and the inequality in (c) exploits the fact that $\Vert \Delta \bm{Z}_{k} \Vert_2\leq \epsilon_{c,k}$ and $\Vert \bm{v} \Vert_2=\sqrt{ML}$. If $ \left \vert  \left (\hat{\bm{h}}_{d,k}^H +\bm{v}^H \hat{\bm{Z}}_{k}\right )\bm{w}_{k} \right \vert \geq \left (\epsilon_{d,k}+ \sqrt{ML}\epsilon_{c,k}  \right )\Vert \bm{w}_{k} \Vert_2$, it is checked that the lower bound in (\ref{obj}) is achieved when 
\begin{align}
&\Delta \bm{h}_{d,k}=-\epsilon_{d,k} e^{j\theta} \frac{\bm{w}_{k}}{\Vert \bm{w}_{k}\Vert_2},\nonumber \\
&\Delta \bm{Z}_{k} =-\epsilon_{c,k} e^{j\theta}  \frac{\bm{v}\bm{w}_{k}^H}{\sqrt{ML}\Vert \bm{w}_{k}\Vert_2},
\end{align}
where $\theta \triangleq \angle \left (   \left (\hat{\bm{h}}_{d,k}^H +\bm{v}^H \hat{\bm{Z}}_{k}\right )\bm{w}_{k}  \right )$.
When $\left \vert \left (\hat{\bm{h}}_{d,k}^H +\bm{v}^H \hat{\bm{Z}}_{k}\right )\bm{w}_{k} \right \vert <  \left (\epsilon_{d,k}+ \sqrt{ML}\epsilon_{c,k} \right )\Vert \bm{w}_{k} \Vert_2$, we select
\begin{align}
\Delta \bm{h}_{d,k}&=-\frac{\epsilon_{d,k}}{\epsilon_{d,k}+\sqrt{ML} \epsilon_{c,k}} \frac{\bm{w}_{k}\bm{w}_{k}^H \hat{\bm{h}}_{k}}{\Vert \bm{w}_{k}\Vert_2^2 },\enspace n=1,\cdots,N,\nonumber\\
\Delta \bm{Z}_{k}&=-\frac{\epsilon_{c,k}}{\sqrt{ML}\epsilon_{d,k}+ML \epsilon_{c,k}} \frac{\bm{v} \hat{\bm{h}}_{k}^H \bm{w}_{k}\bm{w}_{k}^H }{\Vert \bm{w}_{k}\Vert_2^2 },\enspace n=1,\cdots,N.
\end{align}
It is easily verified that $\Vert \Delta \bm{h}_{d,k} \Vert_2 \leq \epsilon_{d,k}$, $\Vert \Delta \bm{Z}_{k} \Vert_F \leq \epsilon_{c,k}$, and $\left \vert \left (\bm{h}_{d,k}+\bm{v}^H \bm{Z}\right )\bm{w}_{k} \right \vert=0$. Then, we can conclude that the worst-case value of $\left \vert \left (\bm{h}_{d,k}+\bm{v}^H \bm{Z}\right )\bm{w}_{k} \right \vert$ is equal to $\max\left\{ \left \vert \left (\hat{\bm{h}}_{d,k}^H +\bm{v}^H \hat{\bm{Z}}_{k}\right )\bm{w}_{k} \right \vert\right.$ $\left. - \left (\epsilon_{d,k}+ \sqrt{ML}\epsilon_{c,k} \right )\Vert \bm{w}_{k} \Vert_2, 0\right\}$.

As for the maximal value of the LHS of constraint $\mathrm{C}_6$, we also have
\begin{align}\label{obj2}
&\left \Vert  \left ((\hat{\bm{h}}_{d,k}+\Delta \bm{h}_{d,k} )^H +\bm{v}^H (\hat{\bm{Z}}_{k}+ \Delta \bm{Z}_{k})\right )\bm{W}_{-k} \right \Vert_2 \nonumber \\
&\leq \left \Vert  \left (\hat{\bm{h}}_{d,k}^H +\bm{v}^H \hat{\bm{Z}}_{k}\right )\bm{W}_{-k}\right \Vert_2 +\left \Vert \left (\Delta \bm{h}_{d,k}^H +\bm{v}^H \Delta \bm{Z}_{k}\right )\bm{W}_{-k} \right \Vert_2 \nonumber\\
&\overset{\text{(a)}}{\leq} \left \Vert  \left (\hat{\bm{h}}_{d,k}^H +\bm{v}^H \hat{\bm{Z}}_{k}\right )\bm{W}_{-k}\right \Vert_2 +\left (\epsilon_{d,k}+ \sqrt{ML}\epsilon_{c,k}  \right )\Vert \bm{W}_{-k} \Vert_F,
\end{align}
where the inequality in $(a)$ is similar to that derived in (\ref{obj}). Hence, we complete the proof.

\end{document}